\newtheorem{definition}{Definition}
\newtheorem{lemma}{Lemma}
\newtheorem{theorem}{Theorem}
\newtheorem{corollary}{Corollary}
\newcolumntype{Y}{>{\hfill\arraybackslash}X}
\newcommand{\comment}[1]{ }
\newcommand{\vect}[1]{\ensuremath{\mathbf{#1}}}
\newcommand{\Pre}{\vect{Pre}} 
\newcommand{\Post}{\vect{Post}}
\newcommand{\TA}[0]{{TA}\xspace}
\newcommand{\TPN}[0]{{TPN}\xspace}
\newcommand{\DTPN}[0]{{DTPN}\xspace}
\newcommand{\SCG}[0]{\ensuremath{SCG}\xspace}
\newcommand{\low}[1]{{\downarrow}#1}
\newcommand{\high}[1]{{\uparrow}#1}
\newcommand{\Nat}{\ensuremath{\mathbb{N}}\xspace}
\newcommand{\Rat}{\ensuremath{\mathbb{Q}}\xspace}
\newcommand{\pReal}{\ensuremath{\mathbb{R}_{\ge 0}\xspace}}
\newcommand{\Itrv}{\ensuremath{\mathbb{I}\xspace}}
\newcommand{\dotminus}{\mathbin{\text{\@dotminus}}}
\newcommand{\@dotminus}{%
  \ooalign{\hidewidth\raise1ex\hbox{.}\hidewidth\cr$\m@th-$\cr}%
}
\newcommand{\trans}[2][{}]{\mathbin{\smash{\overset{{#2}}{{\to}}_{#1}}}}
\newcommand{\wtrans}[2][{}]{\mathbin{\smash{\overset{{#2}}{{\Rightarrow}}_{#1}}}}
\newcommand{\interp}[1]{{[\![}\,{#1}\,{]\!]}}
\newcommand{\RR}[0]{{\cal R}}
\newcommand{\tI}[1][]{\ensuremath{\varphi_{#1}}}
\newcommand{\sI}[1]{\ensuremath{x_{#1}}}
\newcommand{\ssI}[1]{\ensuremath{y_{#1}}}
\newcommand{\falpha}{\ensuremath{A}}
\newcommand{\fbeta}{\ensuremath{B}}
\newcommand{\falphas}{\ensuremath{A^s}}
\newcommand{\fbetas}{\ensuremath{B^s}}
\newcommand{\E}{{\cal E}}
\newcommand{\SIF}{\ensuremath{\textbf{I}_s}\xspace}
\newcommand{\DIF}{\ensuremath{\textbf{I}_d}\xspace}
\newcommand{\half}{\nicefrac{1}{2}}
\DeclareMathOperator{\pers}{prs} 
\DeclareMathOperator{\nenabl}{nbl}
\DeclareMathOperator{\cfl}{\Join}
\DeclareMathOperator{\NBL}{NBL}
\DeclareMathOperator{\CFL}{CFL}
\newcommand{\ndname}[1]{\textsf{#1}\xspace}
\newcommand{\Marks}{\ensuremath{{\cal M}}}
\newcommand{\IM}{\Itrv_{\Marks}}
\newcommand{\divides}{~|~}
\newcommand{\pp}[2]{p_{{#1},{#2}}}
\newcommand{\pq}[2]{q_{{#1},{#2}}}
\begin{document}


\title{Time Petri Nets with Dynamic Firing\\
  Dates: Semantics and Applications}

\author[1,2]{Silvano Dal Zilio}
\author[1,2]{{\L}ukasz Fronc}
\author[1,2]{Bernard Berthomieu}
\author[1,3]{Fran{\c{c}}ois Vernadat}
\affil[1]{CNRS, LAAS, F-31400 Toulouse,
  France}
\affil[2]{Univ de Toulouse, LAAS, F-31400 Toulouse, France}
\affil[3]{Univ de Toulouse, INSA, LAAS, F-31400 Toulouse, France}
\date{}
\maketitle
\begin{abstract}
  We define an extension of time Petri nets such that the time at
  which a transition can fire, also called its \emph{firing date}, may
  be dynamically updated. Our extension provides two mechanisms for
  updating the timing constraints of a net. First, we propose to
  change the static time interval of a transition each time it is
  newly enabled; in this case the new time interval is given as a
  function of the current marking. Next, we allow to update the firing
  date of a transition when it is persistent, that is when a
  concurrent transition fires. We show how to carry the widely used
  {state class} abstraction to this new kind of time Petri nets and
  define a class of nets for which the abstraction is exact. We show
  the usefulness of our approach with two applications: first for
  scheduling preemptive task, as a poor man's substitute for
  stopwatch, then to model hybrid systems with non trivial continuous
  behavior.
\end{abstract}

\section{Introduction}
\label{sec:introduction}

A {\em Time Petri Net}~\cite{Merlin74astudy,BPV06} ({\TPN}) is a Petri
net where every transition is associated to a static time interval
that restricts the date at which a transition can fire. In this model,
time progresses with a common rate in all the transitions that are
enabled; then a transition $t$ can fire if it has been continuously
enabled for a time $\theta_t$ and if the value of $\theta_t$ is in the
static time interval, denoted $\SIF(t)$. The term {static time
  interval} is appropriate in this context. Indeed, the constraint is
immutable and do not change during the evolution of the net. In this
paper, we lift this simple restriction and go one step further by also
updating the timing constraint of persistent transitions, that is
transitions that remain enabled while a concurrent transition fires.
In a nutshell, we define an extension of \TPN where the time at which
a transition can fire, also called its \emph{firing date}, may be
dynamically updated. We say that these transitions are fickle and we
use the term Dynamic \TPN to refer to our extension.

Our extension provides two mechanisms for updating the timing
constraints of a net. First, we propose to change the static time
interval of a transition each time it is newly enabled. In this case
the new time interval $\SIF(t,m)$ is obtained as a function of the
current marking $m$ of the net. Likewise, we allow to update the
deadline of persistent transitions using an expression of the form
$\DIF(t,m,\tI[t])$, that is based on the previous firing date of
$t$. The first mechanism is straightforward and quite similar to an
intrinsic capability of Timed Automata (\TA); namely the possibility
to compare a given clock to different constants depending on the
current state. Surprisingly, it appears that this extension has never
been considered in the context of \TPN. The second mechanism is far
more original. To the best of our knowledge, it has not been studied
before in the context of \TPN or \TA, but there are some similarities
with the updatable timed automata of Bouyer et
al.~\cite{Bouyer2004291}.

The particularity of timed models, such as \TPN, is
that state spaces are typically infinite, with finite representations
obtained by some abstractions of time. In the case of \TPN, states are
frequently represented using composite abstract states, or \emph{state
  classes}, that captures a discrete information (e.g. the marking)
together with a timing information (represented by systems of
difference constraints or zones). We show how to carry the {state
  class} abstraction to our extended model of \TPN. We only
obtain an over-approximation of the state space in the most general
case, but we define a class of nets for which the abstraction
is exact. We conjecture that our approach could be used in other
formal models for real-time systems, such as timed automata for
instance.

There exist several tools for reachability analysis of \TPN based on
the notion of {state class graph}~\cite{BM82,BD91}, like for example
Tina~\cite{BRV04} or Romeo~\cite{gardey2005romeo}. Our construction
provides a simple method for supporting fickle transitions in these
tools. Actually, our extension has been implemented inside the tool
Tina in a matter of a few days. We have used this extension of Tina to
test the usefulness of our approach in the context of two possible
applications: first for scheduling preemptive task, as a poor man's
substitute for stopwatch; next to model dynamical systems with non
trivial continuous behavior.

\paragraph*{Outline of the paper and contributions.} We define the
semantics of \TPN with dynamic firing dates in Sect.~\ref{sec:tpn}. We
prove that we directly subsume the class of ``standard'' \TPN and
that our extension often leads to more concise models. In
Sect.~\ref{sec:simple-example-dynam}, we motivate our extension by
showing how to implement the Quantized State System (QSS)
method~\cite{cellier06:_contin_system_simul}. This application
underlines the advantage of using an asynchronous approach when
modeling hybrid systems. Section~\ref{sec:state-class-abstr} provides
an incremental construction for the state class graph of a dynamic
\TPN. Before concluding, we give some experimental results for two
possible applications of dynamic \TPN.

\section{Time Petri nets and Fickle Transitions}\label{sec:tpn}


A {\em Time Petri net} is a Petri net where transitions are decorated
with static time intervals that constrain the time a transition can
fire. We denote $\Itrv$ the set of possible time intervals. We use a
dense time model in our definitions, meaning that we choose for
$\Itrv$ the set of real intervals with non negative rational
end-points.
To simplify the definitions, we only consider the case of closed
intervals, $[a, b]$, and infinite intervals of the form $[a,
+\infty)$. For any interval $i$ in $\Itrv$, we use the notation
$\low{i}$ for its left end-point and $\high{i}$ for its right
end-point (or $\infty$ if $i$ is unbounded).

We use the expression Dynamic \TPN (\DTPN) when it is necessary to
make the distinction between our model and more traditional
definitions of \TPN.  With our notations, a dynamic \TPN\ is a tuple
$\langle {P},{T},{\Pre},{\Post},m_0,\SIF, \DIF \rangle$ in which:
\begin{itemize}
\item $\langle {P},{T},{\Pre},{\Post}, m_0 \rangle$ is a Petri net,
  with ${P}$ the set of places, ${T}$ the set of transitions, $m_0 : P
  \rightarrow \Nat$ the initial marking, and ${\Pre},~ {\Post} : {T}
  \rightarrow {P} \rightarrow \Nat$ the precondition and
  postcondition functions.
\item $\SIF$ is the \emph{static interval function}, that associates a
  time interval (in $\Itrv$) to every transition (in $T$).
\item \DIF is the \emph{dynamic interval function}. It will be used to
  update the firing date of persistent transitions.

\end{itemize}

We slightly extend the ``traditional'' model of \TPN and allow to
define the static time interval of a transition as a function of the
markings, meaning that $\SIF$ is a function of $T \to (P \to \Nat) \to
\Itrv$. We will often used the curryied function $\SIF(t)$ to denote
the mapping from a marking $m$ to the time interval $\SIF(t,m)$. 

We also add the notion of \emph{dynamic interval function}, \DIF, that
is used to update the firing date of persistent transitions. The idea
is to update the firing date $\tI[t]$ of a persistent transition $t$
using a function of $\tI[t]$. Hence \DIF is a function of $T \to (P
\to \Nat) \to \pReal \to \Itrv$. For example, a transition $t$ such
that $\DIF(t, m, \theta) = [\theta+1, \theta+2]$, for all $\theta \geq
0$, models an event that is delayed by between $1$ and $2$ unit of
time (u.t.) when a concurrent transition fires.

\subsection{A Semantics for Time Petri Nets Based on Firing Functions}

As usual, we define a \emph{marking} $m$ of a \TPN as a function $m :
{P} \rightarrow \Nat$ from places to integers. A transition $t \in
{T}$ is {\em enabled} at $m$ if and only if $m \ge \Pre(t)$ (we use
the pointwise comparison between functions). We denote ${\cal E}(m)$
the set of transitions enabled at $m$.


A \emph{state} of a {\TPN} is a pair $s = (m, \tI)$ in which $m$ is a
marking and $\tI: {T} \rightarrow \pReal$ is a mapping, called the
\emph{firing function} of $s$, that associates a firing date to every
transition enabled at $m$. Intuitively, if $t$ is enabled at $m$, then
$\tI[t]$ is the date (in the future, from now) at which $t$ should
fire. Also, the transitions that may fire from a state $(m, \tI)$ are
exactly the transitions $t$ in $\E(m)$ such that $\tI[t]$ is minimal;
they are the first scheduled to fire.

For any date $\theta$ in $\pReal$, we denote $\tI \dotminus \theta$
the partial function that associates the transition $t$ to the value
$\tI[t] - \theta$, when $\tI[t] \geq \theta$, and that is undefined
elsewhere. This operation is useful to model the effect of time
passage on the enabled transitions of a net. We say that the firing
function $\tI \dotminus \theta$ is well-defined if it is defined on
exactly the same transitions than $\tI$.

The following definitions are quite standard. The semantics of a \TPN
is a Kripke structure $\langle S,S_0,\rightarrow\rangle$ with only two
possible kind of actions: either $s \trans{t} s'$ (meaning that the
transition $t \in T$ is fired from $s$); or $s \trans{\theta} s'$,
with $\theta \in \pReal$ (meaning that we let time $\theta$ elapse
from $s$). A transition $t$ may fire from the state $(m,\tI)$ if $t$
is enabled at $m$ and firable instantly (that is $\tI[t] = 0$). In a
state transition $(m, \tI) \trans{t} (m', \tI')$, we say that a
transition $k$ is \emph{persistent} (with $k \neq t$) if it is also
enabled in the marking $m - \Pre(t)$, that is if $m - \Pre(t) \geq
\Pre(k)$. The transitions that are enabled at $m'$ and not at $m$ are
called \emph{newly enabled}. We define the predicates $\pers$ and
$\nenabl$ that describe the set of persistent and newly enabled
transitions after $t$ fires from $m$:
\[
\begin{array}{lcl}
\pers(m,t) &=& \{ k \in \E(m) \mid  m - \Pre(t) \geq \Pre(k) \}\\
\nenabl(m,t) &=& \{ k \in (T \setminus \E(m)) \cup \{ t \} \mid  m - \Pre(t) + \Post(t) \geq \Pre(k) \}
\end{array}
\]
We use these two predicates to define the semantics of \DTPN.

\begin{definition}\label{def:tpnstate}
  The semantics of a {\DTPN} $\langle {P},{T},{\Pre},{\Post},m_0,\SIF,
  \DIF \rangle$ is the timed transition system $SG = \langle
  S,S_0,\rightarrow\rangle$ such that:
  \begin{itemize}
  \item $S$ is the set of states of the {\TPN};
  \item $S_0$, the set of initial states, is the subset of states of
    the form $(m_0, \tI)$, where $m_0$ is the initial marking and
    $\tI[t] \in \SIF(t, m_0)$ for every $t$ in $\E(m_0)$;
  \item the state transition relation ${\rightarrow} \subseteq S
    \times ({{T} \cup \pReal})\! \times S$ is the smallest relation
    such that for all state $(m, \tI)$ in $S$:
   \begin{itemize}
   \item[(i)] if $t$ is enabled at $m$ and $\tI[t] = 0$ then $(m, \tI)
     \trans{t} (m', \tI')$ where $m' = m - \Pre(t) + \Post(t)$ and
     $\tI'$ is a firing function such that $\tI[k]' \in \DIF(k, m',
     \tI[k])$ for all persistent transition $k \in \pers(m,t)$ and
     $\tI[k]' \in \SIF(k, m')$ otherwise.
   \item[(ii)] if $\tI \dotminus \theta$ is well-defined then $(m,\tI)
     \trans{\theta} (m,\tI \dotminus \theta)$.
   \end{itemize}
\end{itemize}

\label{def:sg}
\end{definition}

The state transitions labelled over $T$ (case $(i)$ above) are the
{\em discrete} transitions, those labelled over $\pReal$ (case $(ii)$)
are the {\em continuous}, or time elapsing, transitions. It is clear
from Definition~\ref{def:sg} that, in a discrete transition $(m, \tI)
\trans{t} (m', \tI')$, the transitions enabled at $m'$ are exactly
$\pers(m,t) \cup \nenabl(m,t)$. In the target state $(m', \tI')$, a
newly enabled transition $k$ get assigned a firing date picked ``at
random'' in $\SIF(k,m')$. Similarly, a persistent transition $k$ get
assigned a firing date in $\DIF(k,m', \tI[k])$.
Because there may be an infinite number of transitions, the state
spaces of {\TPN} are generally infinite, even when the net is
bounded. This is why we introduce an abstraction of the semantics in
Sect.~\ref{sec:state-class-abstr}.


We can define two simple extensions to \DTPN. First, we can use a
special treatment for re-initialized transitions; transitions that are
enabled before $t$ fires and newly-enabled after. In this case we
could use the previous firing date to compute the static
interval. Then, in the interval functions \DIF and \SIF, we can use
the ``identifier'' of the transition that fires in addition to the target
marking, $m'$. These extensions preserve the results described in this
paper

Our definitions differ significantly from the semantics of \TPN
generally used in the literature. For instance, in the works of
Berthomieu et al.~\cite{Ber03b}, states are either based on
\emph{clocks}---that is on the time elapsed since a transition was
enabled---or on \emph{firing domains} (also called time zones)---that
abstract the sets of possible ``time to fire'' using intervals. Our
choice is quite close to the \TPN semantics based on firing domains
(in particular we have the same set of traces) and is similar in
spirit to the semantics used by Vicario et
al.~\cite{DBLP:journals/tse/VicarioSC09} for reasoning about
Stochastic Time Petri nets. We made the choice of an unorthodox
semantics to simplify our definition of firing date. We conjecture
that most of our definitions can be transposed to a clock-based
semantics.

\subsection{Interesting Classes of \DTPN}
\label{sec:inter-subcl}

In the standard semantics of \TPN~\cite{Merlin74astudy}, the firing
date of a persistent transition is left unchanged. We can obtain a
similar behavior by choosing for $\DIF(t, m,\theta)$ the time interval
$[\theta, \theta]$. We say in this case that the dynamic interval
function is \emph{trivial}. Another difference with respect to the
standard definition of \TPN is the fact that the (static!) time
interval of a transition may change. We say that a dynamic net is a
\TPN if its static function, $\SIF$, is constant and its dynamic
function, $\DIF$, is trivial.  We say that a \DTPN is \emph{weak} if
only the function $\DIF$ is trivial.
We show that \TPN are as expressive than weak \DTPN when the nets are
bounded. Weak nets are still interesting though, since the use of
non-constant interval functions can lead to more concise models.  On
the other hand, the results of Sect.~\ref{sec:state-class-abstr} show
that, even in bounded nets, fickle transitions are more expressive
than weak ones.

\begin{theorem}\label{th:standard}
  For every weak \DTPN that has a finite set of reachable markings,
  there is a \TPN that has an equivalent semantics.
\end{theorem}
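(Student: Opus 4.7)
The plan is to take a product of the weak \DTPN\ $N = \langle P,T,\Pre,\Post,m_0,\SIF,\DIF\rangle$ with its finite reachable \emph{extended state space}. An extended state is a pair $(m,C)$ where $m$ is a reachable marking and $C : \E(m) \to R$ is a \emph{context function} that records, for each currently enabled transition $k$, the marking $C(k)$ at which $k$ was most recently newly enabled; here $R$ denotes the finite set of reachable markings of $N$. Starting from $(m_0, C_0)$ with $C_0(t) = m_0$ for all $t \in \E(m_0)$, and propagating $C'(k) = C(k)$ for $k \in \pers(m,t)$ and $C'(k) = m'$ for $k \in \nenabl(m,t)$ when firing $t$ (with $m' = m - \Pre(t) + \Post(t)$), I enumerate the reachable set $S^*$; finiteness of $R$ and $T$ makes $S^*$ finite.

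The candidate \TPN\ $N'$ is then defined from $S^*$. Introduce a control place $p_{(m,C)}$ for every $(m,C) \in S^*$, with the sole initial token on $p_{(m_0,C_0)}$. For each $(m,C) \in S^*$ and each $t \in \E(m)$, add a transition $t_{(m,C)}$ whose precondition consumes $p_{(m,C)}$ and whose postcondition produces $p_{(m',C')}$, where $(m',C')$ is the successor of $(m,C)$ under $t$. Set $\SIF'(t_{(m,C)})$ to the \emph{constant} interval $\SIF(t, C(t))$; since this no longer depends on the marking of $N'$, the net $N'$ is a standard \TPN.

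To prove equivalence, I exhibit a timed bisimulation $\RR$ between the semantics of $N$ and $N'$: relate $(m,\varphi)$ of $N$ to $(M,\varphi')$ of $N'$ whenever $M$ puts its unique token on some $p_{(m,C)}$ and $\varphi(t) = \varphi'(t_{(m,C)})$ for every $t \in \E(m)$. Initial states are immediately related since $\SIF'(t_{(m_0,C_0)}) = \SIF(t, m_0)$. Continuous steps preserve $\RR$ because clocks decrease in parallel on both sides. For a discrete step firing $t$ from $(m,\varphi)$, firing $t_{(m,C)}$ in $N'$ newly enables exactly $\{k_{(m',C')} : k \in \E(m')\}$, and the identities $\SIF'(k_{(m',C')}) = \SIF(k, C(k))$ for $k \in \pers(m,t)$ and $\SIF'(k_{(m',C')}) = \SIF(k, m')$ for $k \in \nenabl(m,t)$ show that the interval from which $N'$ picks each new firing date agrees with the interval dictated by $N$'s semantics.

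The delicate step---and the main obstacle---is that in $N$ a persistent $k$ literally \emph{inherits} its firing date (trivial $\DIF$), while in $N'$ the transition $k_{(m',C')}$ is distinct from $k_{(m,C)}$, so the \TPN\ clock resets and a value is freshly chosen from $\SIF(k, C(k))$. Reconciling the two sides rests on the invariant that any date inherited in $N$ already lies in $\SIF(k, C(k))$ by construction, so the nondeterministic choice made by $N'$ can always be aligned with the value inherited in $N$; conversely, any fresh choice of $N'$ after a reset is realised in $N$ by a correspondingly different initial nondeterministic choice made at the moment $k$ was originally newly enabled. Pinning down this correspondence between the two sides' nondeterministic choices, together with a routine induction on trajectory length, is where the bulk of the proof effort concentrates.
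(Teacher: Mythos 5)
There is a genuine gap, and it sits exactly where you locate it: the treatment of persistent transitions. In your encoding every discrete step consumes the unique control token $p_{(m,C)}$ and produces $p_{(m',C')}$, so \emph{every} transition $k_{(m,C)}$ is disabled and \emph{every} $k_{(m',C')}$ is newly enabled. For a transition $k$ that is persistent in $N$, the standard \TPN semantics of $N'$ therefore re-draws a fresh firing date from $\SIF(k,C(k))$, whereas $N$ keeps the \emph{remaining} time $\tI[k]$, i.e.\ the value originally drawn from $\SIF(k,C(k))$ \emph{minus all time elapsed since $k$ was newly enabled}. Your claimed invariant --- that the inherited date still lies in $\SIF(k,C(k))$ --- is false as soon as any time has passed: with $\SIF(k,C(k))=[2,3]$ and a concurrent transition firing at date $1$, the inherited value lies in $[1,2]$ while $N'$ must pick from $[2,3]$; $N$ can then fire $k$ after one more time unit but $N'$ cannot, so no timed bisimulation relates the two states. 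The ``converse'' repair you sketch (realising $N'$'s fresh choice by retroactively changing $N$'s earlier nondeterministic choice) is not available either: a bisimulation must match moves from the \emph{current} pair of related states, in which $\tI[k]$ is already fixed.

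The paper's construction is designed precisely to avoid this collapse. It also records, for each enabled $k$, the interval $I=\SIF(k,m)$ in force when $k$ was last newly enabled, but it does so in a \emph{separate} two-place sub-net $E(k,I)$ per pair $(k,I)$, with a local transition $k_I$ of static interval $I$; the global marking lives in its own places $P_{\Marks}$. When some other transition fires, the token in $E(k,I)$ is simply not touched, so $k_I$ remains enabled in the \TPN sense and its firing date is carried over by the ordinary persistence rule of \TPN --- no re-drawing occurs. The actual firing of $k$ is then delegated to auxiliary $[0,0]$ transitions, which forces the equivalence to be a \emph{weak} timed bisimulation (the $k_I$ steps are silent), not the strong one you aim for. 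To salvage your approach you would have to decouple the ``enabledness token'' of each $k$ from the monolithic control place so that it survives the firing of concurrent transitions --- which is essentially to rediscover the paper's $E(k,I)$ components, together with the (nontrivial) bookkeeping for disabling conflicting transitions whose interval annotation cannot be known statically.
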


\begin{proof} see Appendix~\ref{sec:proof-theorem-1}.
\end{proof}

We define a third class of nets, called \emph{translation \DTPN},
obtained by restricting the dynamic interval function \DIF. This class
arises naturally during the definition of the State Class Graph
construction in Sect.~\ref{sec:state-class-abstr}. Intuitively, with
this restriction, a persistent transition can only shift its firing
date by a ``constant time''. The constant can be negative and may be a
function of the marking. More precisely, we say that a \DTPN is a
translation if, for every transitions $t$, there are two functions
$\kappa_{1}$ and $\kappa_{2}$ from $(P \to \Nat) \to \Rat$ such that
${\DIF(t,m,\theta)}$ is the time interval $[\falpha, \fbeta]$ where
$\falpha = \max(0, \theta + \kappa_{1}(m))$ and $\fbeta =
\max(\falpha, \theta + \kappa_{2}(m))$. (The use of $\max$ in the
definition of $\falpha, \fbeta$ is necessary to accomodate negative
constants $\kappa_i(m)$.)


\subsection{Interpretation of the Quantized State System Model}
\label{sec:simple-example-dynam}


With the addition of fickle transitions, it is possible to model
systems where the timing constraints of an event depend on the
current state. This kind of situations arises naturally in
practice. For instance, we can use the function \SIF to model the fact
that the duration of a communication depends on the length of a
message. Likewise, we can use the fickle function \DIF when modeling
the typical workflow of a conference, in which a deadline may be
postponed when particular events occurs.

In this section, we consider a simple method for analyzing the
behavior of a system with one continuous variable, $x$, governed by
the ordinary differential equation $\dot{x} = f(x)$. The idea is to
define a \TPN that computes the value $x(\theta)$ of the variable $x$
at the date $\theta$. To this end, we use an extension of \TPN with
shared variables, $x, y, \dots$, where every transition may be guarded
by a boolean predicate (\textbf{on}\,$b$) and such that, upon firing,
a transition can update the environment (using a sequence of
assignments, \textbf{do}\,$e$). This extension of \TPN with shared
variables can already be analyzed using the tool Tina.

\begin{figure}
  \centering
  {\includegraphics[width=0.48\textwidth]{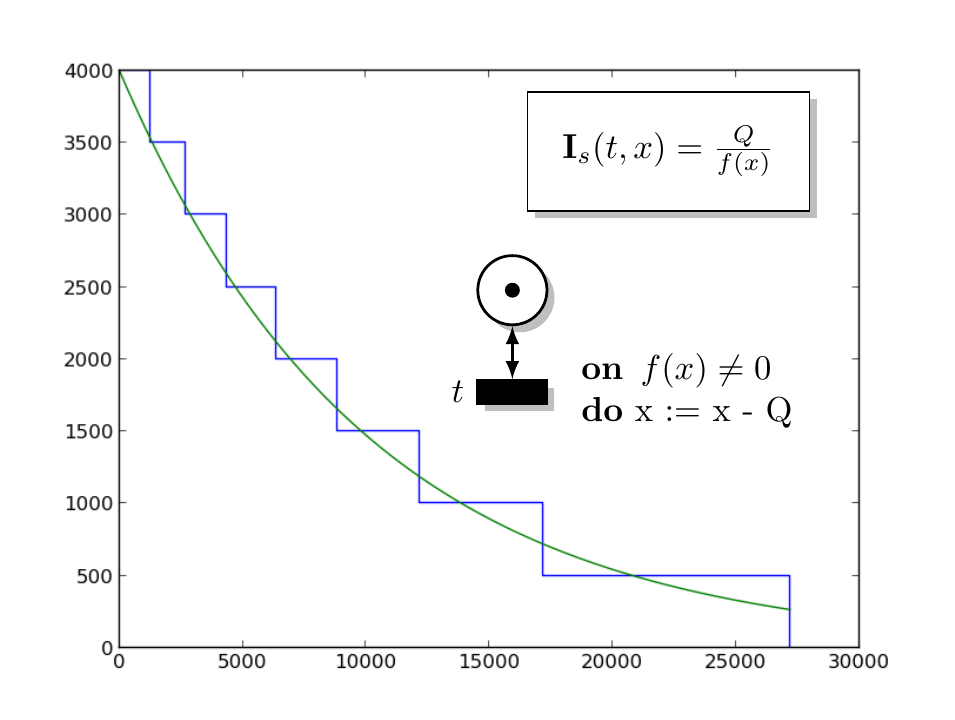}}
  \hfill
  {\includegraphics[width=0.48\textwidth]{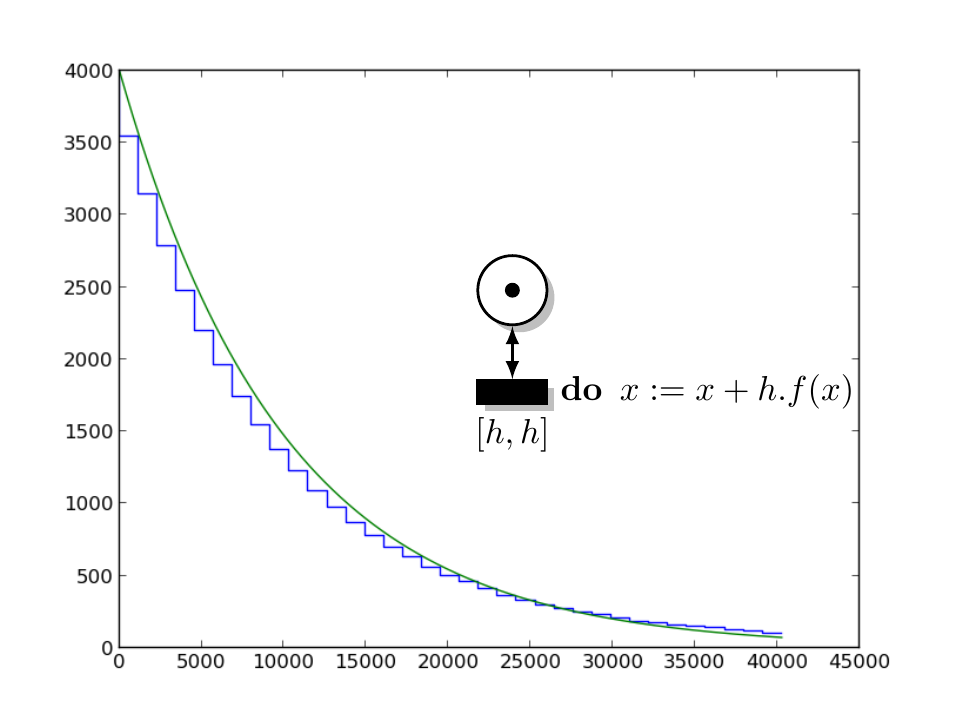}}
  \caption{A simple QSS simulation (left) and the Euler method (right)
    for $\dot{x} = -x$. ($Q = 500$, $h = 1150$, global error smaller
    than $500$.)\label{fig::qss}}
\end{figure}

The simplest solution is based on the Euler forward method. This is
modeled by the \TPN of Fig.~\ref{fig::qss} (right) that periodically
execute the instruction $x := x + h.f(x)$ every $h$ (the value of the
time step, $h$, is the only parameter of the method). This solution is
a typical example of synchronous system, where we sample the evolution
of time using a ``quantum of time''. A synchronous approach answers
the following question: given the value of $x$ at time $k.h$, what is
its value at time $(k+1).h$?

The second solution is based on the Quantized State System (QSS)
method~\cite{cellier06:_contin_system_simul,cellier2008quantized},
which can be interpreted as the dual---the asynchronous
counterpart---of the Euler method. QSS uses a ``quantum of value'',
$Q$, meaning that we only consider discrete values for $x$, of the
form $k.Q$ with $k \in \Nat$. The idea is to compute the time
necessary for $x$ to change by an amount of $Q$. To
paraphrase~\cite{cellier2008quantized}, the QSS method answers the
following modified question: given that $x$ has value $k.Q$, what is
the earliest time at which $x$ has value $(k \pm 1). Q$? This method
has a direct implementation using fickle transitions: at first
approximation, the time $\tI[t]$ for $x$ to change by an amount of $Q$
is given by the relation $(x \pm Q) = x + \tI[t] . f(x)$, that is
$\tI[t] = \nicefrac{Q}{|f(x)|}$. We have that the \emph{time slope} of
$x$ is equal to $\nicefrac{1}{f(x)}$. The role of the guard $f(x) \neq
0$ on transition $t$ is to avoid pathological values for the slope;
when $f(x)$ is nil the value of $x$ stays constant, as needed.

We compare the results obtained with these two different solutions in
Fig.~\ref{fig::qss}, where we choose $f(x) = - x$ and $x(0) =
4000$. Each plot displays the evolution of the \TPN compared to the
analytic solution, in this case $x(\theta) = 4000 e^{-\theta}$.
Numerical methods are of course approximate; in both cases (Euler and
QSS) the global error is proportional to the quantum.  The plots are
obtained with the largest quantum values giving a global error smaller
than $500$, that is a step $h$ of $1150$ and a quantum $Q$ of
$500$. The dynamic \TPN has $10$ states while the standard \TPN has
$38$. The ratio improves when we try to decrease the global error. For
instance, for an error smaller than $100$ (which gives $Q = 100$ and
$h = 250$) we have $42$ states against $182$. We observe that in this
case the ``asynchronous'' solution is more concise than the
synchronous one.

The Euler method is the simplest example in a large family of
iterative methods for approximating the solutions of differential
equations. The QSS method used in this section can be enhanced in just
the same way, leading to more precise solutions, with better numerical
stability. Some of the improved QSS methods have been implemented in
our tool, but we still experiment the effect of numerical instability
on some stiff systems. In these cases, the synchronous approach (that
is deterministic) may sometimes exhibit better performances.


Although we make no use of the fickle function \DIF here,
it arises naturally when the system has multiple variables. Consider a
system with two variables, $x, y$, such that $x = f(x,y)$. We can use
the same solution than in Fig.~\ref{fig::qss} to model the evolution
of $x$ and $y$. When the value of $x$ just changes, the next update is
scheduled at the date $\nicefrac{Q}{f(x,y)}$ (the time slope is $f_1 =
\nicefrac{1}{f(x,y)}$). If the value of $y$ is incremented before this
deadline---say that the remaining time if $\theta_1$---we need to
update the time slope and use the new value $f_2 =
\nicefrac{1}{f(x,y+Q)}$.

\begin{figure}
  \centering
  \hfill
  \raisebox{-0.5\height}{{\includegraphics[height=3.7cm]{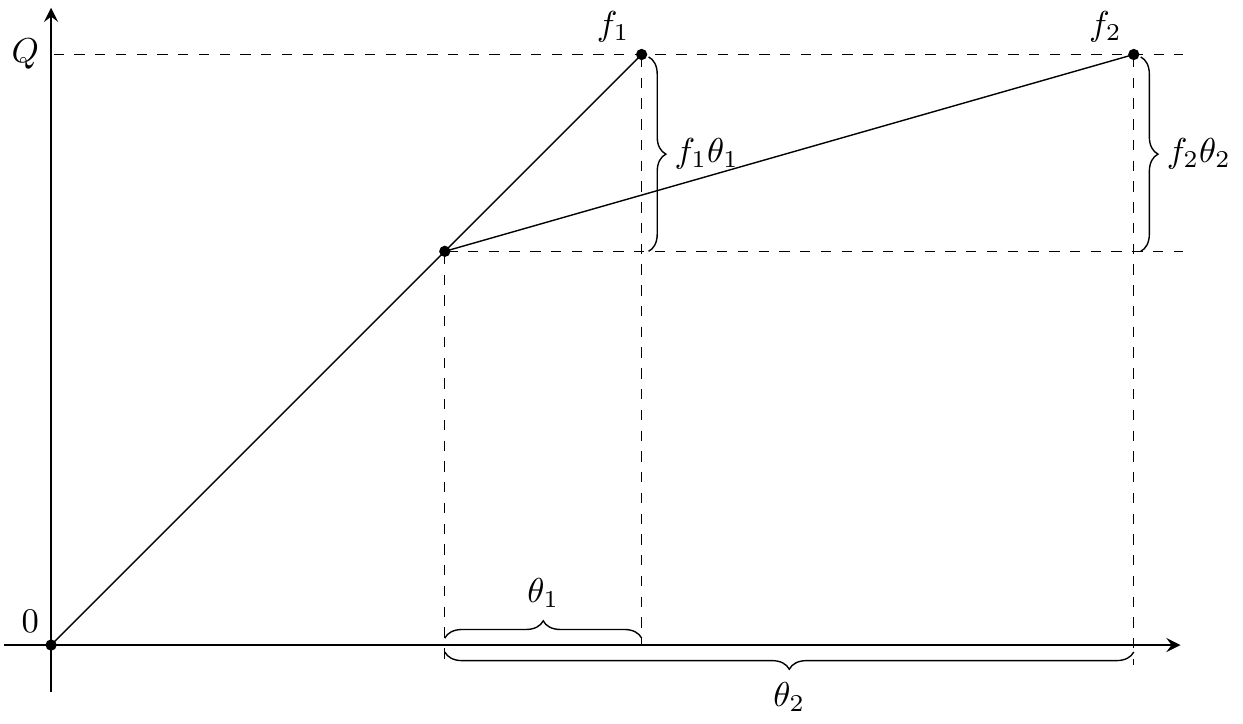}}}
  \hfill
  \raisebox{-0.5\height}{{\includegraphics[height=4.5cm]{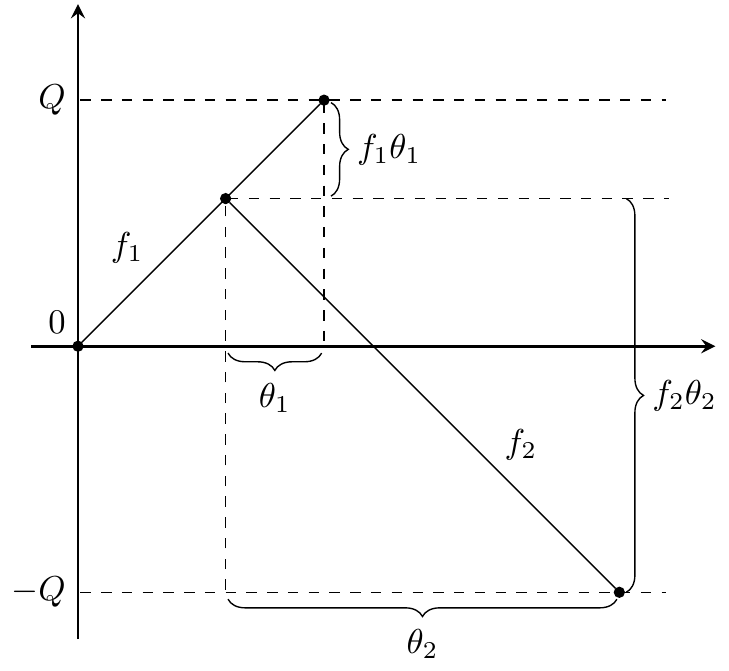}}	}
  \hfill\hspace*{0pt}
  \caption{Computing the updated firing date in the QSS
    method.\label{fig::qss2}}
\end{figure}
\enlargethispage{\baselineskip}

We illustrate the situation in the two diagrams of
Fig.~\ref{fig::qss2}, where we assume that $f_1$ is positive. For
instance, if the two slopes have the same sign (diagram to the left),
we need to update the firing date to the value $\theta_2$ such that
$|f_1| . \theta_1 = |f_2| . \theta_2$. Likewise, when $f_2$ is
negative, we have the relation $|f_1| . \theta_1 + |f_2| . \theta_2 =
2.Q$. Therefore, depending on the sign of $f_1.f_2$ (the sign of
$\dot{y}$ tell us whether $y$ is incremented or decremented) we
have $\DIF(t, x, y, \theta) = [\falpha(\theta),\falpha(\theta)]$ with:
\[
\falpha(\theta) = \frac{|f(x,y \pm Q)|}{|f(x,y)|} . \theta
\qquad \text{ or } \qquad \frac{|f(x,y \pm Q)|}{|f(x,y)|} . \left (
  2.Q.|f(x, y)| - \theta \right )
\]

This example shows
that it is possible to implement the QSS method using only linear
fickle functions. (We discuss briefly the associated class of \DTPN at
the end of Sect.~\ref{sec:state-class-abstr}.) Since the notion of
slope is central in our implementation of the QSS method, we could
have used instead an extension of \TPN with multirate
transitions~\cite{Daws95twoexamples}, that is a model where time
advance at different rate depending on the state. While the case
$f_1.f_2 > 0$ lends itself well to this extension, it is not so
obvious when the slopes have different signs. On the opposite, it
would be interesting to use fickle transitions as a way to mimic
multirate transitions.


\section{A State Class Abstraction for Dynamic \TPN}
\label{sec:state-class-abstr}

In this section, we generalize the state class abstraction method to
the case of \DTPN. A State Class Graph (\SCG) is a finite abstraction
of the timed transition system of a net that preserves the markings
and traces. The construction is based on the idea that temporal
information in states (the firing functions) can be conveniently
represented using systems of difference
constraints~\cite{Ramalingam95solvingdifference}. We show that the
\SCG faithfully abstract the semantics of a net when the dynamic
interval functions are translations. We only over-approximate the set
of reachable markings in the most general case.

A state class $C$ is defined by a pair $(m, D)$, where $m$ is a
marking and the firing domain $D$ is described by a (finite) system of
linear inequalities. We say that two state classes $C = (m, D)$ and
$C'= (m', D')$ are equal, denoted $C \cong C'$, if $m=m'$ and $D
\Leftrightarrow D'$ (i.e. $D$ and $D'$ have equal solution
sets). Hence class equivalence is decidable. In a domain $D$, we use
variables $\sI{t}, \ssI{t}, \dots$ to denote a constraint on the value
of $\tI[t]$. A domain $D$ is defined by a set of difference
constraints in reduced form: $\alpha_i \leq \sI{i} \leq \beta_i$ and $\sI{i} - \sI{j} \leq
 \gamma_{i,j}$,
where $i, j$ range over a given subset of ``enabled transitions'' and
the coefficients $\alpha, \beta$ and $\gamma$ are rational numbers. We
can improve the reduced form of $D$ by choosing the tightest possible
bounds that do not change its associated solutions set. In this case
we say that D is in closure form. We show in Th.~\ref{th:scg} how to
compute the coefficients of the closure form incrementally.


In the remainder of this section, we use the notation $\falphas_t(m)$
and $\fbetas_t(m)$ for the left and right endpoints of
$\SIF(t,m)$. Likewise, when the marking $m$ is obvious from the
context, we use the notations $\falpha_t(\theta)$ and
$\fbeta_t(\theta)$ for the left and right endpoints of
$\DIF(t,m,\theta)$, that is $\falpha_t(\theta) =
\low{\DIF(t,m,\theta)}$ and $\fbeta_t(\theta) =
\high{\DIF(t,m,\theta)}$. We call $\falpha_t$ and $\fbeta_t$ the
fickle functions of $t$. In the remainder of the text, we assume that
$0 \leq \falpha_t(\theta) \leq \fbeta_t(\theta)$ for all possible
(positive) date $\theta$ and that $\falpha(\infty) = \infty$. We also
require these functions to be monotonically increasing. We impose no
other restrictions on the fickle functions.


We define inductively a set of classes $C_\sigma$, where $\sigma \in
T^*$ is a sequence of discrete transitions firable from the initial
state. This is the State Class Graph construction
of~\cite{BM82,BD91}. Intuitively, the class $C_\sigma = (m, D_\sigma)$
``collects'' the states reachable from the initial state by firing
schedules of support sequence $\sigma$. The initial class
$C_{\epsilon}$ is $(m_{0}, D_0)$ where $D_0$ is the domain defined by
the set of inequalities $\falphas_i(m_0) \le \sI{i} \le
\fbetas_i(m_0)$ for all $i$ in ${\cal E}(m_0)$.

Assume $C_\sigma=(m,D)$ is defined and that $t$ is enabled at $m$. We
details how to compute the domain for the class $C_{\sigma.t}$. First
we test whether the system $D$ extended with the constraints $D_t =
\{\sI{k} - \sI{t} \ge 0 \mid t \neq k, k \in \E(m)\}$ is
consistent. This is in order to check that transition $t$ can be fired
before any other enabled transitions $k$ at $m$. If $D \wedge D_t$ is
consistent, we add $C_{\sigma.t} = (m',D')$ to the set of reachable
classes, where $m'$ is the result of firing $t$ from $m$, i.e.  $m' =
m - \Pre(t) + \Post(t)$. The computation of $D'$ follows the same
logic than with standard \TPN.

We choose a set of fresh variables, say $\ssI{k}$, for every
transition $k$ that is enabled at $m'$. For every persistent
transition, $k \in \pers(m,t)$, we add the constraints $\sI{k} =
\ssI{k}- \sI{t}$ to the set of inequalities in $D \wedge D_t$. The
variable $\ssI{k}$ matches the firing date of $k$ at the time $t$
fires, that is, the value of $\tI[k]$ used in the expression $\DIF(k,
m', \tI[k])$ (see Definition~\ref{def:sg}, case $(i)$). For every
newly enabled transition, $k \in \nenabl(m,t)$, we add the constraints
$\falphas_k(m') \le \ssI{k} \le \fbetas_k(m')$. This constraint
matches the fact that $\tI[k]'$ is in the interval $\SIF(k, m')$ if
$k$ is newly enabled at $m'$. As a result, we obtain a set of
inequations where we can eliminate all occurrences of the variables
$\sI{k}$ and $\sI{t}$. After removing redundant inequalities and
simplifying the constraints on transitions in conflicts with $t$---so
that the variables only ranges over transitions enabled at $m'$---we
obtain an ``intermediate'' domain $D_\mathit{int}$ that obeys the
constraints: $\kappa_{i} \leq \ssI{i} \leq \lambda_{i}$ and $\ssI{i} -
\ssI{j} \leq \mu_{i,j}$, where $i, j$ range over $\E(m')$ and the
constants $\kappa, \lambda$ and $\mu$ are defined as follows.
\[
\begin{array}[c]{lcl@{\quad}l}
  \kappa_i &=& 
  \left \{ \begin{array}[c]{l}
      \falphas_i(m')\\
      \max\, (0 , \{ -\gamma_{i,j} \mid {i,j \in \E(M)} \})\\ 
    \end{array} \right . & 
  \begin{array}[c]{l}
    \text{if $i$ is newly enabled,}\\
    \text{otherwise}
  \end{array}
  \\[1em]
  \lambda_i &=& 
  \left \{ \begin{array}[c]{l}
      \fbetas_i(m')\\
      \gamma_{i,t}\\
    \end{array} \right . &
  \begin{array}[c]{l}
    \text{if $i$ is newly enabled,}\\
    \text{otherwise}
  \end{array}
\\[1em]
  \mu_{i,j} &=& 
  \left \{ \begin{array}[c]{l}
      \lambda_i - \kappa_j\\
      \min\, (\gamma_{i,j}, \lambda_i - \kappa_j)\\
    \end{array} \right . &
  \begin{array}[c]{l}
    \text{if either $i$ or $j$ newly enabled,}\\
    \text{otherwise}\\
  \end{array}
\end{array} \tag{C1}
\]

Finally, we need to apply the effect of the fickle functions. For
this, we rely on the fact that $\falpha_i$ and $\fbeta_i$ are
monotonically increasing functions. To obtain $D'$, we choose a set of
fresh variables, say $\sI{i}'$, for every transition $i \in \E(m')$
and add the following relations to $D_\mathit{int}$. To simplify the
notation, we assume that in the case of a newly enabled transition,
$j$, the functions $\falpha_j$ and $\fbeta_j$ stand for the identity
function (with this shorthand, we avoid to distinguish cases where
both or only one of the transitions are persistent):
\[
\begin{array}[c]{l@{\quad}l}
  \sI{i}' = \ssI{i}   & \text{if $i,
    j$ are  newly enabled}\\
  \falpha_i(\ssI{i}) \leq \sI{i}' \leq \fbeta_i(\ssI{i}) \text{\ and\ } \sI{i}' -
  \sI{j}' \leq \fbeta_i(\ssI{i}) - \falpha_j(\ssI{j}) & \text{if $i$
    or $j$ are persistent}\\
\end{array}
\]
The relation for newly enabled transitions simply states that
$\ssI{i}$ already captures all the constraints on the firing time
$\tI[i]'$. For persistent transitions, the first relation states that
$\sI{i}'$ is in the interval $[\falpha_i(\ssI{i}),
\fbeta_i(\ssI{i})]$, that is in $\DIF(i, m', \tI(i))$.

We obtain the domain $D'$ by eliminating all the variables of the kind
$\ssI{i}$.  First, we can observe that, by monotonicity of the
functions $\falpha_i$ and $\fbeta_i$, we have $\falpha_i(\kappa_i)
\leq \falpha_i(\ssI{i})$ and $\fbeta_i(\ssI{i}) \leq
\fbeta_i(\lambda_i)$. This gives directly a value for the coefficients
$\alpha'_i$ and $\beta'_i$. The computation of the coefficient
$\gamma'_{i,j}$ is more complex, since it amounts to computing the
maximum of a function over a convex sets of points. Indeed
$\gamma'_{i,j}$ is the least upper-bound for the values of $\sI{i}' -
\sI{j}'$ over $D_{\mathit{int}}$ or, equivalently:
\[
\begin{array}{lcl}
  \gamma'_{i,j} &=& \max \, \{ 
  \fbeta_i(\ssI{i}) - \falpha_j(\ssI{j}) \mid \ssI{i},
  \ssI{j} \in D_{\mathit{int}} \}\\
  &=& \max \, \{ \fbeta_i(\ssI{i}) - \falpha_j(\ssI{j}) \mid
  \kappa_i \leq \ssI{i} \leq \lambda_i,
  \kappa_j \leq \ssI{j} \leq \lambda_j,
  \ssI{i} - \ssI{j} \leq \mu_{i,j}\}\\
  \end{array}
\]

It is possible to simplify the definition of $\gamma'_{i,j}$. Indeed,
if we fix the value of $\ssI{j}$ then, by monotonicity of $\fbeta_i$,
the maximal value of $\fbeta_i(\ssI{i}) - \falpha_j(\ssI{j})$ is
reached when $\ssI{i}$ is maximal. Hence we have two possible cases:
either $(i)$ it is reached for $\ssI{i} = \ssI{j} + \mu_{i,j}$ if
$\kappa_j \leq \ssI{j} \leq \lambda_i + \mu_{i,j}$; or $(ii)$ it is
reached for $\ssI{i} = \lambda_i$ if $\lambda_i - \mu_{i,j} \leq
\ssI{j} \leq \lambda_j$. This result is illustrated in the schema of
Fig.~\ref{fig::dbm::gamma::int}, where we display an example of domain
$D_\mathit{int}$. When $\ssI{j}$ is constant (horizontal line), the
maximal value is on the ``right'' border of the convex set (bold
line). We also observe that in case $(ii)$, by monotonicity of
$\falpha_j$, the maximal value is equal to $\fbeta_i(\lambda_i) -
\falpha_j(\lambda_i + \mu_{i,j})$. Therefore the value of
$\gamma'_{i,j}$is obtained by computing the maximal value of the
expression $\fbeta_i(\theta) - \falpha_j(\theta - \mu_{i,j})$, that
is:
\[
  \gamma'_{i,j} = \max \, \{ \fbeta_i(\theta) - \falpha_j(\theta - \mu_{i,j}) \mid   
  \kappa_j + \mu_{i,j} \leq \theta \leq \lambda_i \}  \tag{C2}
\]
As a consequence, the value of $\gamma'_{i,j}$ can be computed by
finding the minimum of a numerical function (of one parameter) over a
real interval, which is easy.

\begin{figure}
  \centering
  \subfloat[Domain $D_{\mathit{int}}$ projected over $t_i, t_j$
  \label{fig::dbm::gamma::int}] {
    \includegraphics[scale=0.95]{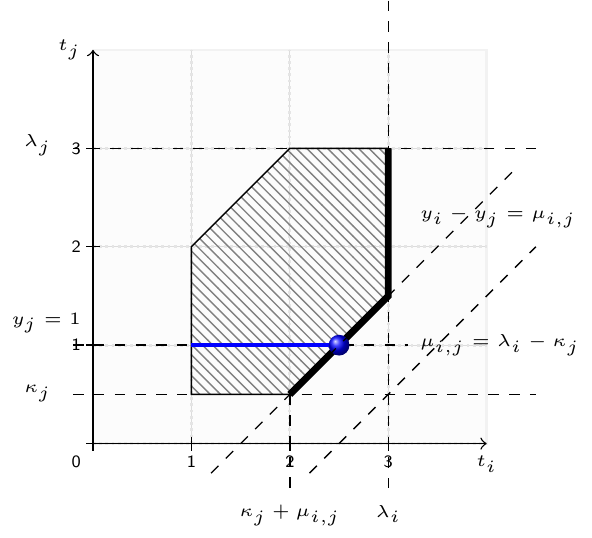}
  }
  \subfloat[Domain $D'$ obtained from $D_{\mathit{int}}$
  \label{fig::dbm::gamma::prime}] {
    \includegraphics[scale=0.95]{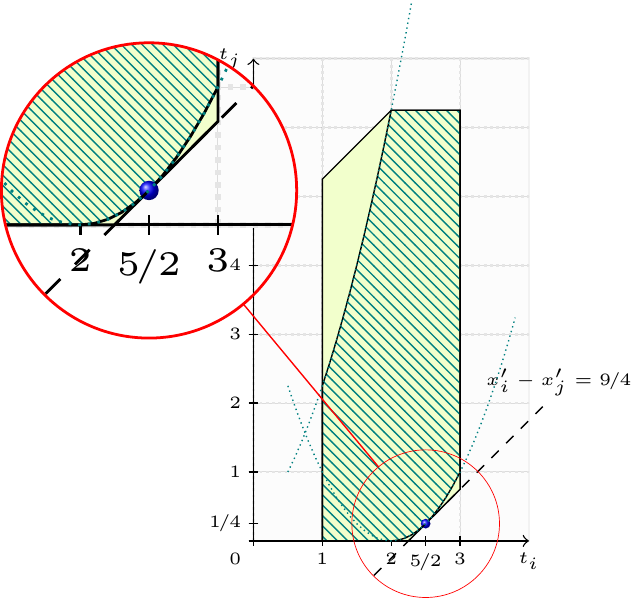}
  }
  \caption{Computing the coefficient $\gamma'_{i,j}$ in the domain $D'$.} \label{fig::dbm::gamma}
\end{figure}

We display in Fig.~\ref{fig::dbm::gamma::prime} the domain $D'$
obtained from $D_\mathit{int}$ after applying the fickle functions. In
this example, $t_j$ is the only fickle transition and we choose
$\falpha_j(\theta) = \fbeta_j(\theta) = (\theta - \half)^2$ when
$\theta \geq \half$. With our method we have that $\mu_{i,j} =
\nicefrac{3}{2}$ and the value of $\gamma'_{i,j}$ is obtained by
computing the maximal value of the expression $(\theta - \half)^2 -
(\theta - \nicefrac{3}{2})$ with $\theta \in [2,3]$, that is
$\nicefrac{9}{4}$.

\begin{theorem}\label{th:scg}
  Assume $C = (m, D)$ is a class with $D$ in closure form. Then for
  every transition $t$ in $\E(m)$ there is a unique class $(m', D')$
  obtained from $C$ by firing $t$. The domain $D'$ is also in closure
  form and can be computed incrementally as follows (we assume that
  $\falpha_i$ and $\fbeta_i$ stands for the identity functions when
  $i$ is newly enabled).
  \[
  \begin{array}[c]{lcl@{\quad}l}
    \alpha'_i 
    &=& \falphas_i(m') & \text{if $i$ is newly enabled,}\\
    &=& \max\, (\falpha_i(0) , 
    \{ \falpha_i(-\gamma_{i,j}) \mid {i,j \in \E(M)} \}) &
    \text{otherwise}\\ 

    \beta'_i 
    &=& \fbetas_i(m') & \text{if $i$ is newly enabled,}\\
    &=& \fbeta_i(\gamma_{i,t}) & \text{otherwise}\\

    \gamma'_{i,j} 
    &=& \min (\gamma_{i,j}, \beta'_i - \alpha'_j) &
    \text{if $i,j$ are newly enabled,}\\
    &=&  \max \{ \fbeta_i(\theta) -
    \falpha_j(\theta - \mu_{i,j}) \mid \mu_{i,j} + \kappa_j \leq x \leq
    \lambda_i\}
    & \text{otherwise}\\
    \multicolumn{4}{r}{(\text{where } \lambda_i,
      \kappa_j \text{ and } \mu_{i,j} \text{ are defined as in \textrm{(C1)}}\/)}\\
  \end{array}
  \]
  Moreover, if the state $(m, \tI)$ is reachable in the state graph of
  a net, say $N$, and $(m, \tI) \trans{\theta}\trans{t} (m', \tI')$
  then there is a class $C_\sigma = (m, D)$ reachable in the \SCG
  computed for $N$ with $\tI \in D$, $C_{\sigma.t} = (m', D')$ and
  $\tI'\in D'$.
\end{theorem}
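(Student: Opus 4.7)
The plan is to prove the two claims by induction on the length of the firing sequence $\sigma$, with the inductive invariant that $C_\sigma = (m, D_\sigma)$ is in closure form and that $D_\sigma$ describes \emph{exactly} the set of firing functions $\tI$ such that $(m,\tI)$ is reachable from some initial state by a continuous-discrete alternation of support $\sigma$. The base case $\sigma = \epsilon$ is immediate from the definition of $C_\epsilon = (m_0, D_0)$ and the fact that $D_0$ is a box of constraints $\falphas_i(m_0) \leq \sI{i} \leq \fbetas_i(m_0)$, hence already in closure form.

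For the incremental step, I would reuse the derivation given just before the theorem statement. Starting from $C_\sigma = (m,D)$ and a transition $t$ enabled at $m$, consistency of $D \wedge D_t$ is exactly the condition that $t$ can be scheduled first, and the substitution $\sI{k} = \ssI{k} - \sI{t}$ for persistent $k$ together with the fresh box $\falphas_k(m') \leq \ssI{k} \leq \fbetas_k(m')$ for newly enabled $k$ describes precisely the firing functions reachable at $m'$ \emph{before} the fickle functions are applied. Then quantifier elimination over $\sI{t}$ and the old $\sI{k}$ variables yields the intermediate system with coefficients $\kappa_i, \lambda_i, \mu_{i,j}$ from $(\mathrm{C1})$; the point here is that this is the classical \SCG computation for standard \TPN, so the resulting $D_\mathit{int}$ is in closure form by the well-known argument of~\cite{BD91}. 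The formulas for $\alpha'_i$ and $\beta'_i$ in the theorem are then obtained by observing that the extreme values of $\sI{i}'$ over $D_\mathit{int}$ are reached at $\ssI{i} = \kappa_i$ and $\ssI{i} = \lambda_i$ respectively, so monotonicity of $\falpha_i$ and $\fbeta_i$ gives $\alpha'_i = \falpha_i(\kappa_i)$ and $\beta'_i = \fbeta_i(\lambda_i)$, and unfolding $\kappa_i$ and $\lambda_i$ yields the stated expressions. For the newly enabled case, $\falpha_i, \fbeta_i$ are taken as identity, so $\alpha'_i = \falphas_i(m')$ and $\beta'_i = \fbetas_i(m')$ as required.

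The delicate step, and the one I expect to be the main obstacle, is establishing the formula for $\gamma'_{i,j}$ and the fact that the resulting $D'$ is in closure form. Here I would reuse the ``right-border'' argument sketched in the paper (illustrated in Fig.~\ref{fig::dbm::gamma::int}): fixing $\ssI{j}$ and using monotonicity of $\fbeta_i$, the supremum of $\fbeta_i(\ssI{i}) - \falpha_j(\ssI{j})$ is attained on the segment $\ssI{i} = \min(\lambda_i, \ssI{j} + \mu_{i,j})$; reparametrising by $\theta = \ssI{i}$ gives the displayed expression $(\mathrm{C2})$. For the newly enabled case, $\falpha_j$ and $\fbeta_i$ are identity, so $(\mathrm{C2})$ reduces to $\min(\mu_{i,j}, \lambda_i - \kappa_j) = \min(\gamma_{i,j}, \beta'_i - \alpha'_j)$, recovering the first line. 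To check closure form, I need to verify that the computed $\alpha'_i, \beta'_i, \gamma'_{i,j}$ are the tightest bounds compatible with $D'$; this follows because each bound is obtained as the sup or inf of the corresponding linear form over the intermediate polyhedron $D_\mathit{int}$, composed with a monotone function.

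Finally, for the soundness part (the last sentence of the statement), I would proceed by induction on the length of the run leading to $(m,\tI)$. By the induction hypothesis there is a reachable class $C_\sigma = (m, D)$ with $\tI \dotminus \theta \in D$ after the continuous step; since the continuous transitions act on a state by a uniform shift that is already quotiented out by the domain representation (the variables $\sI{i}$ encode the remaining \emph{time-to-fire}), we get $\tI \in D$. For the discrete step, the firing rule of Def.~\ref{def:sg} sets $\tI[k]' \in \DIF(k,m',\tI[k])$ for persistent $k$ and $\tI[k]' \in \SIF(k,m')$ for newly enabled $k$; this is precisely the constraint that the values $\ssI{k}$ and $\sI{k}'$ satisfy in the construction above, so $\tI' \in D'$. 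Completeness (that every point of $D'$ arises from some reachable $(m',\tI')$) follows symmetrically from the fact that each constraint in $D'$ is witnessed by a concrete assignment in $D_\mathit{int}$, by the polyhedral arguments used in the computation of $\gamma'_{i,j}$.
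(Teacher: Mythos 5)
Your plan for the computational part of the theorem follows the paper's own derivation (the construction of $D_\mathit{int}$ via (C1), the monotonicity argument for $\alpha'_i,\beta'_i$, and the ``right border'' argument for $\gamma'_{i,j}$ via (C2)), and that part is fine. The genuine problem is your inductive invariant and the final completeness claim. You assert that $D_\sigma$ describes \emph{exactly} the set of reachable firing functions of support $\sigma$, and that ``every point of $D'$ arises from some reachable $(m',\tI')$ ... because each constraint in $D'$ is witnessed by a concrete assignment in $D_\mathit{int}$.'' This is false for general fickle functions, and the paper says so explicitly right after the theorem: the image of $D_\mathit{int}$ under the coordinate-wise maps $\ssI{i} \mapsto \sI{i}'$ with $\falpha_i(\ssI{i}) \leq \sI{i}' \leq \fbeta_i(\ssI{i})$ need not be a convex polyhedron when the $\falpha_i,\fbeta_i$ are nonlinear, and $D'$ is only the \emph{tightest difference-bound system containing} that image. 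Tightness of each individual bound (each is attained by some image point) does not imply that the conjunction of bounds has no extraneous points; Fig.~\ref{fig::dbm::gamma::prime} exhibits exactly such extraneous points. Your claimed invariant would make Theorem~\ref{th:regular} (exactness for translation nets) vacuous and contradicts the stated fact that the \SCG is in general only an over-approximation. Consequently the induction, as you set it up, breaks at the first step where a non-affine fickle function is applied.

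The theorem itself only asserts the soundness inclusion, so the fix is to weaken your invariant to one direction: every firing function reachable at $m$ by a schedule of support $\sigma$ lies in $D_\sigma$ (not conversely). With that invariant the discrete step goes through exactly as you describe (the substitution $\sI{k} = \ssI{k} - \sI{t}$ and the constraints $\falpha_k(\ssI{k}) \leq \sI{k}' \leq \fbeta_k(\ssI{k})$ are satisfied by the concrete values prescribed by Definition~\ref{def:sg}, hence $\tI' \in D'$), and the tightness argument you give is still what is needed to show $D'$ is in closure form --- each bound is attained by a point of the image, which is contained in the solution set of $D'$, so no bound can be tightened. Your treatment of the continuous step is also slightly muddled: the induction hypothesis gives $\tI \in D$ directly (the domain collects time-to-fire vectors at the moment $m$ is entered), and the elapsed time $\theta$ is absorbed into the variable $\sI{t}$ eliminated during the construction, via the consistency check $D \wedge D_t$; there is no separate ``shift'' step to justify.
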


The hatched area inside the domain displayed in
Fig.~\ref{fig::dbm::gamma::prime} is the image of the domain
$D_\mathit{int}$ after its transformation by the fickle function
$\falpha_j(\theta)$. We see that some points of $D'$ have no
corresponding states in $D_\mathit{int}$. Hence we only have an
over-approximation. (We do not have enough place to give an example of
net with a marking that is in reachable in the \SCG but not reachable
in the state space, but such an example is quite easy to build.)  If
we consider the definition of the coefficients $\gamma'$ in equation
(C2), we observe that the situation is much simpler if the fickle
functions are translations. Actually, it is possible to prove that, in
this case, the \SCG construction is exact.

\begin{theorem}\label{th:regular}
  If the \DTPN $N$ is a translation then the \SCG defined in
  Th.~\ref{th:scg} has the same set of reachable markings and the same
  set of traces than the timed transition system of $N$.
\end{theorem}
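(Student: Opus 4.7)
The proof reduces to showing an inclusion that is not covered by Theorem~\ref{th:scg}. The theorem already gives the soundness direction: every state $(m,\tI)$ reachable in the TTS appears, up to its firing function, inside some class $C_\sigma=(m,D)$ of the \SCG. Consequently, the set of reachable markings and the set of firing sequences $\sigma$ of the TTS are always included in those of the \SCG. What must be established under the translation hypothesis is the converse: every class $C_\sigma=(m,D)$ of the \SCG is inhabited by at least one genuine reachable state with the same marking and firing history $\sigma$.

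The plan is to prove, by induction on $|\sigma|$, the following stronger completeness statement: for every class $C_\sigma=(m,D)$ computed in the \SCG and every $\tI\in D$, the state $(m,\tI)$ is reachable in the TTS by a schedule of support $\sigma$. The base case $\sigma=\epsilon$ is immediate from the definitions of $C_\epsilon$ and of the set of initial states. For the inductive step, assume the claim for $C_\sigma=(m,D)$ and let $C_{\sigma.t}=(m',D')$ be the class obtained by firing $t$. Given an arbitrary $\tI'\in D'$, one must exhibit $\tI\in D$ and $\theta\in\pReal$ such that
\[
(m,\tI)\trans{\theta}(m,\tI\dotminus\theta)\trans{t}(m',\tI');
\]
applying the induction hypothesis to $(m,\tI)$ then supplies a reachable predecessor and closes the step.

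The \SCG construction produces $D'$ from $D$ in two stages, $D \leadsto D_{\mathit{int}} \leadsto D'$, and the proof inverts them in reverse order. The second stage --- the passage from $D$ to $D_{\mathit{int}}$ --- is the classical ``fire $t$ after delay $\theta$'' transformation of~\cite{BM82,BD91} and is known to be exact for ordinary \TPN: any point of $D_{\mathit{int}}$ arises from a pair $(\tI,\theta)$ with $\tI\in D$. The translation hypothesis is only needed to invert the fickle stage $D_{\mathit{int}} \leadsto D'$. When $\DIF$ is a translation, the fickle constraint reduces to $\sI{i}'\in[\max(0,\ssI{i}+\kappa_1^i(m')),\max(0,\ssI{i}+\kappa_2^i(m'))]$, i.e.\ $\sI{i}'$ is obtained from $\ssI{i}$ by adding a constant $c_i\in[\kappa_1^i,\kappa_2^i]$ and clipping to $\pReal$. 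Hence $D'$ is exactly the Minkowski sum of $D_{\mathit{int}}$ with the box $\prod_i[\kappa_1^i,\kappa_2^i]$, intersected with the non-negative orthant. Accordingly, in equation~(C2) the expression $\fbeta_i(\theta)-\falpha_j(\theta-\mu_{i,j})$ collapses, away from the clipped region, to the constant $\kappa_2^i-\kappa_1^j+\mu_{i,j}$, and its maximum over the admissible range of $\theta$ is attained at every such $\theta$. The bounds on $D'$ computed in Theorem~\ref{th:scg} are therefore sharp, and every $\tI'\in D'$ admits at least one decomposition $\tI[i]'=\ssI{i}+c_i$ with $\V{y}\in D_{\mathit{int}}$ and $c_i\in[\kappa_1^i,\kappa_2^i]$.

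The chief technical obstacle is the bookkeeping around the clipping operator $\max(0,\cdot)$: when $\ssI{i}+\kappa_1^i<0$, the lower endpoint of $\DIF$ is pinned at $0$, so the preimage of a value $\tI[i]'=0$ is an interval rather than a single point, and one must verify that preimages can be chosen simultaneously for all components without violating the difference constraints $\ssI{i}-\ssI{j}\leq\mu_{i,j}$ of $D_{\mathit{int}}$. The slope-one behaviour of $\falpha_i$ and $\fbeta_i$ outside the clipped region, combined with the monotonicity required of fickle functions, ensures that every componentwise admissible choice extends to a globally admissible one, so the Minkowski description of $D'$ survives clipping and the inductive step closes. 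Since the induction also preserves the firing support $\sigma$, equality of the reachable markings and of the trace sets follows simultaneously.
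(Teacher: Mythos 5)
Your proof is correct and rests on the same key observation as the paper's own sketch: under the translation hypothesis the expression $\fbeta_i(\theta)-\falpha_j(\theta-\mu_{i,j})$ in (C2) is constant, so the computed domain $D'$ coincides exactly with the image of $D_{\mathit{int}}$ under the fickle map, every point of $D'$ has a preimage, and exactness follows by induction on the firing sequence. You additionally make explicit the soundness/completeness split, the inversion of the classical step $D\leadsto D_{\mathit{int}}$, and the clipping at $0$ (which the paper's two-line sketch silently ignores by writing $\fbeta_i(\theta)=\theta+c_i$ without the $\max$); note also that the constant value is $c_i-c_j+\mu_{i,j}$ as you compute it, not $c_i-c_j-\mu_{i,j}$ as printed in the paper.
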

\begin{proof}[sketch]
  By equation (C2), if the net is a translation then there are two
  constants $c_i, c_j$ such that $\fbeta_i(\theta) = \theta + c_i$ and
  $\falpha_j(\theta) = \theta + c_j$. Therefore the expression
  $\fbeta_i(\theta) - \falpha_j(\theta - \mu_{i,j})$ is constant and
  equal to $c_i - c_j - \mu_{i,j}$ (the maximum is reached all over
  the boundary of the domain). In this case, every state in $D'$ has a
  corresponding states in $D_\mathit{int}$.
\end{proof}

We can also observe that, if the dynamic interval bounds $\falpha_i$
and $\fbeta_i$ are linear functions, then we can follow a similar
construct using (general) systems of inequations for the domains
instead of difference constraints. This solution gives also an exact
abstraction for the state space but is not interesting from a
computational point of view (since we loose the ability to compute a
canonical form for the domain incrementally). In this case, we are in
a situation comparable to the addition of stopwatch to \TPN where
systems of difference constraints are not enough to precisely capture
state classes. With our computation of the coefficient $\gamma'$, see
equation (C2), we use instead the ``best difference bound matrix''
that contains the states reachable from the class $C$. This
approximation is used in some tools that support stopwatches, like
Romeo~\cite{gardey2005romeo}.

\section{Two Application for Dynamic \TPN}
\label{sec:two-appl-dynam}

We study two possible applications for fickle transitions. First to
model a system of preemptive, periodic tasks with fixed duration. Next
to model hybrid system with non trivial continuous behavior. These
experiments have been carried out using a prototype extension of Tina.
The tool and the all the models are available online at
\url{http://projects.laas.fr/tina/fickle/}.

\subsection{Scheduling Preemptive Tasks}
\label{sec:sched-preempt-tasks}

We consider a simple system consisting of two periodic tasks, Task1
and Task2, executing on a single processor. Task2 has a period of $10$
unit of time (u.t.) and a duration of $6$ u.t. ; Task1 has a period of
$5$ u.t. and a duration of $1$ and can preempt Task2 at any time. We
display in Fig.~\ref{fig::example::sched} a \TPN model for this
system. Our model makes use of a \emph{stopwatch arc}, drawn using an
``open box'' arrow tip
($\rule[2pt]{1em}{0.5pt}\mkern-3mu{\boxempty}$), and of an inhibitor
arc ($\rule[2pt]{1em}{0.5pt}\mkern-3mu{\ocircle}$).

The net is the composition of four components. The roles of Sched1 and
Sched2 is is to provide a token in place \ndname{psched} at the
scheduling date of the tasks. The behavior of the nets corresponding
to Task1 and Task2 are similar. 
Both nets are $1$-safe and their (unique) token capture the state of
the tasks. When the token is in place \ndname{e}, the task execute;
when it is in place \ndname{w} it is waiting for its scheduling
event. Hence we have a scheduling error if there is a token in place
\ndname{psched} and not in place \ndname{w}.

\begin{figure}
\centering
\begin{tikzpicture}[      
  every node/.style={anchor=south west,inner sep=0pt},
  x=1mm, y=1mm,
  ]   

  \tikzstyle{marks}=[thick,dashed,rounded corners,draw=teal]
  \tikzstyle{nom}=[at start,below right,yshift = -2pt,xshift=2em]
  \node (fig1) at (0,0)
  {\includegraphics[width=9.7cm]{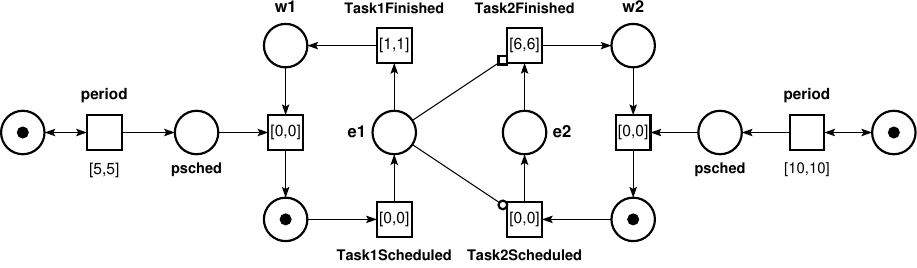}};

  \draw[marks] (-1,-0.5) rectangle (25,30)
  node[nom]{Sched1};
  \draw[marks] (26,-0.5) rectangle (49,30)
  node[nom]{Task1};
  \draw[marks] (50,-0.5) rectangle (72,30)
  node[nom]{Task2};
  \draw[marks] (73,-0.5) rectangle (100,30)
  node[nom]{Sched2};
\end{tikzpicture}
\caption{System with one preemptive and one simple
  task.} \label{fig::example::sched}
\end{figure}

We use an inhibitor arc between the place \ndname{e1} and the
transition \ndname{Task2Scheduled} to model the fact that Task2 cannot
use the processor if Task1 is already running. We use a stopwatch arc
between \ndname{e1} and the transition \ndname{Task2Finished} to model
the fact that Task1 can preempt Task2 at any moment. A stopwatch
(inhibitor) arc ``freezes'' the firing date of its
transition. Therefore the completion of Task2 (the firing date of
\ndname{Task2Finished}) is postponed as long as Task1 is
running. Using the same approach, we can define a \TPN modeling a
system with one preemptive task and $n$ ``simple''tasks.

We can define an equivalent model using fickle transitions instead of
stopwatch. The idea is to add the duration of Task1 to the completion
date of Task2 each time Task1 starts executing (that is
\ndname{Task1Scheduled} fires). This can be obtained by removing
stopwatch arcs and using for \ndname{Task2Finished} the fickle
functions $\falpha(\theta) = \fbeta(\theta) = \theta + 1$ when
\ndname{Task1Scheduled} fires and the identity otherwise.  The
resulting dynamic \TPN is a translation and therefore the \SCG
construction is exact. In this new model, we simulate preemption by
adding the duration of the interrupting thread to the completion date
of the other running thread. The same idea was used by Bodeveix et
al. in~\cite{NaRaBoFi2008}, where they prove the correctness of this
approach using the B method. This scheme can be easily extended to an
arbitrary set of preemptive tasks with fixed priority.

The following table gives the results obtained when computing the \SCG
for different number of tasks. The models with fickle transitions have
slightly more classes than their stopwatch counterpart. Indeed, in the
fickle case, the firing date of \ndname{Task2Finished} can reach a
value of $7$, while it is always bounded by $6$ with stopwatches. The
last row of the Table gives the computation time speedup between our
implementation of fickle transitions and the default implementation of
stopwatch in Tina. We observe that the computation with fickle
transitions is (consistently) two times faster; this is explained by
the fact that the algorithmic for stopwatches is more complex. Memory
consumption is almost equal between the two versions approaches, with
a slight advantage for the fickle model.

\newlength{\sdzlength}
\setlength{\sdzlength}{0.8\textwidth}
\newcommand{\rrq}[2]{\DIVIDE{#2}{#1}{\ratio}\ROUND{\ratio}{\ratio}$\times
  \ratio$ $\displaystyle\left ( \nicefrac{#1 s}{#2 s} \right )$}
\newcommand{\rrs}[2]{%
\begin{tabular}[b]{@{}p{0.15\sdzlength}@{}}
  \hfill${{#1}}$\hfill\hspace*{0pt}\\
  \hline
  \hfill${{#2}}$\hfill\hspace*{0pt}
\end{tabular}}
\newcommand{\firstcol}[2]{%
  \begin{tabular}[c]{@{}c@{}}
    {#1}\\
    {#2}
  \end{tabular}}
\begin{table}
\begin{center}
  \begin{tabular}{|>{\centering\arraybackslash}p{0.2\sdzlength}| 
      *5{>{\centering\arraybackslash}m{0.155\sdzlength}|} @{}m{0pt}@{}}    
    \hline
    {\# tasks} & 2 &  3 &  5
    &  10 &  12 
    &\\
    \hline
    \firstcol{\# states} 
    {$\displaystyle\left ( \nicefrac{\text{fickle}}{\text{stopwatch}} \right )$}  & 
    \rrs{84}{83} & 
    \rrs{208}{205} & 
    \rrs{1\,786}{1\,771} & 
    \rrs{539\,902}{539\,391} & 
    \rrs{5\,447\,504}{5\,445\,457} 
    &\\
    \hline
    \firstcol{time speedup}
    {$\displaystyle\left ( \nicefrac{\text{fickle}}{\text{stopwatch}} \right )$}  & 
    \rrq{0.005}{0.010} & 
    \rrq{0.022}{0.042}& 
    \rrq{0.37}{0.784} & 
    \rrq{170}{392} & 
    \rrq{3077}{6024} 
    &\\
    \hline
  \end{tabular}\\
\end{center}
\caption{Comparing the use of Fickle Transitions and Stopwatch.}
  \label{tab:sched-results}
\end{table}

\subsection{Verification of Linear Hybrid systems}
\label{sec:other-examples}

The semantics of fickle transitions came naturally from our goal of
implementing the QSS method using \TPN (see
Sect.~\ref{sec:simple-example-dynam}). We give some experimental
results obtained using this approach on two very simple use cases.

Our first example is a model for the behavior of hydraulic cylinders
in a {landing gear system}~\cite{ABZ2014-Landing-Gear-System-Case-Study}.
The system can switch between two modes, extension and retraction. The
only parameter is the position $x$ of the cylinder head. (It is
possible to stop and to inverse the motion of a cylinder at any time.)
The system is governed by the relation $\dot{x} = 5 - x$ while
opening, with $x \in [0, 5]$, and $\dot{x} = -1$ while closing. We can
model this system using two fickle transitions.

\begin{figure}
  \centering
  \hfill
  \raisebox{-0.5\height}{{\includegraphics[width=0.4\textwidth]{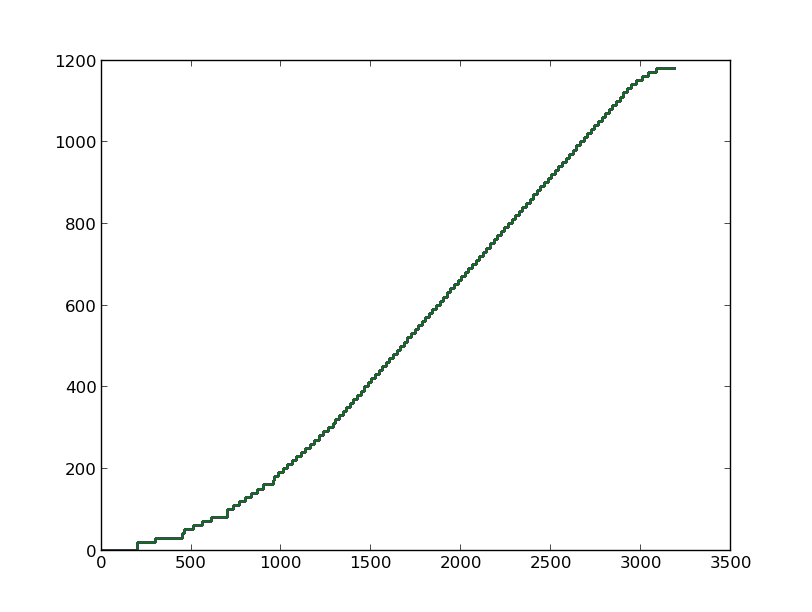}}}
  \hfill
  \raisebox{-0.5\height}{{\includegraphics[width=0.4\textwidth]
    {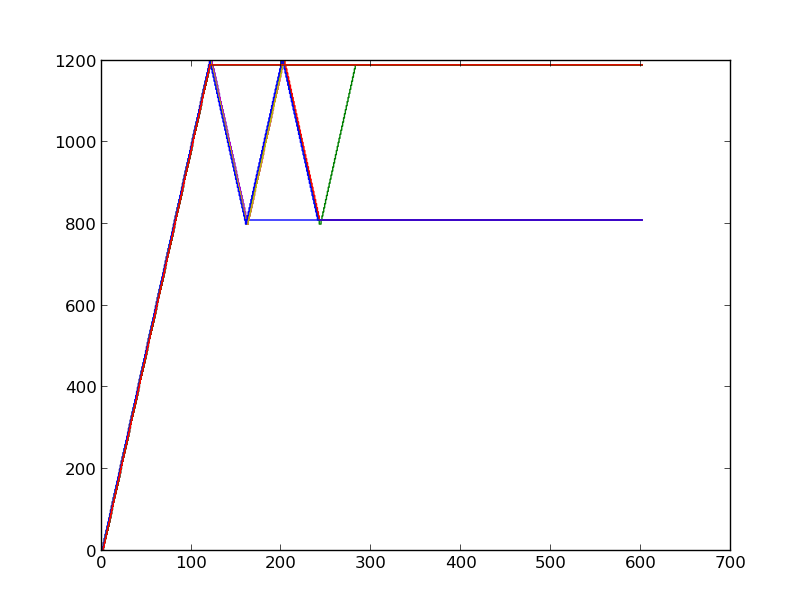}}}
  \hfill\hspace*{0pt}
  \caption{Evolution of the PI-controller: fickle (left) and
    discrete (right) versions.\label{fig::pid}}
\end{figure}

The second example is a model for a \emph{double integrator}, an
extension of the simple integrator of Fig.~\ref{fig::qss} to a system
with two interdependent variables $x_1$ and $x_2$. The system has two
components, $P_1, P_2$, where $P_i$ is in charge of the evolution of
$x_i$, for $i \in \{1,2\}$, and each $x_i$ is governed by the relation
$\dot{x_i} = f_i(x_1, x_2)$. The components $P_1$ and $P_2$ are
concurrent and interact with each other by sending an event when the
value of $x_i$ changes. Therefore the system mixes message passing and
hybrid evolution. This system can be used to solve second order linear
differential equations of the form $\ddot{y} = k_P \dot{y} + k_I (S -
y)$; we simply take $\dot{x_1} = x_2$ and $\dot{x_2} = k_P x_2 + k_I
(S - x_1)$. This family of equations often appears in control-loop
feedback mechanisms, where they model the behavior of
proportional-integral (PI) controller. For example, a system with
double quantized integrators is studied in~\cite{foures2012formal} in
the context of a dynamic cruise controller.

We compare the results obtained with our two versions of the
integrator: fickle and discrete (synchronous).  Figure.~\ref{fig::pid}
displays the evolution of the variable $x_1$ in the PI-controller for
our two models, with a quantum of $\nicefrac{1}{10}$. We observe that
the discrete version does not converge with this time step (we need to
choose a value of $\nicefrac{1}{100}$).



\begin{center}
  \begin{tabular}{|>{\centering\arraybackslash}m{0.18\sdzlength}|
      >{\centering\arraybackslash}m{0.14\sdzlength}|
      >{\centering\arraybackslash}m{0.14\sdzlength}|
      >{\centering\arraybackslash}m{0.16\sdzlength}|
      >{\centering\arraybackslash}m{0.16\sdzlength}|
      >{\centering\arraybackslash}m{0.16\sdzlength}|
      @{}m{0pt}@{}}    
    \hline
    {System} & 
    \multicolumn{2}{c|}{Landing Gear} &
    \multicolumn{3}{c|}{ Cruise Control (PI-controller)} &\\
    \cline{2-6}
    {(version) parameters} &
    (fickle) $Q=\nicefrac{1}{10}$  & 
    (discrete) $h=\nicefrac{1}{10}$ &  
    (fickle) $Q = \nicefrac{1}{10}$  &  
    (discrete) $h=\nicefrac{1}{10}$ &
    (discrete) $h=\nicefrac{1}{100}$  &\\
    \hline
    {\# states} 
    & 
    {1\,906} & 
    {2\,590} & 
    {259} &
    {2\,049} &
    {20\,549} &\\
    \hline
    {time (s)}
    & 
    {0.076} & 
    {0.125} & 
    {0.004} &
    {0.017} &
    {0.185} &\\
    \hline
    {memory\,(MB)}
    & 
    {1.00} & 
    {1.56} & 
    {0.11} &
    {0.90} &
    {9.02} &\\
    \hline
  \end{tabular}\\
\end{center}

\section{Conclusion and Related Work}
\label{sec:conclusion}

We have shown how to extend the \SCG construction to handle fickle
transitions. The \SCG is certainly the most widely used state space
abstraction for Time Petri nets: it is a convenient abstraction for
LTL model checking; it is finite when the set of markings is bounded;
and it preserves both the markings and traces of the net. The results
are slightly different with dynamic \TPN, even for the restricted
class of translation nets. In particular, we may have an infinite \SCG
even when the net is bounded. This may be the case, for instance, if
we have a transition that can stay persistent infinitely and that is
associated to the fickle function $\DIF(\theta) = [\theta + 1,
\theta+1]$. This entails that our construction may not terminate, even
if the set of markings is bounded. This situation is quite comparable
to what occurs with \emph{updatable timed
  automata}~\cite{Bouyer2004291} and, like in this model, it is
possible to prove that the model-checking problem is undecidable in
the general case. This does not mean that our construction is useless
in practice, as we show in our examples of
Sect.~\ref{sec:two-appl-dynam}.

The notion of fickle transitions came naturally as the simplest
extension of \TPN able to integrate the Quantized State System (QSS)
method~\cite{cellier06:_contin_system_simul} inside Tina. Although
there are still problems left unanswered, this could provide a
solution for supporting hybrid systems inside a real-time
model-checker. Theorem~\ref{th:scg} gives clues on how to support
fickle transitions in existing tools for standard \TPN. Indeed, the
incremental computation of the coefficients of the ``difference-bound
matrices'' ($\alpha, \beta$ and $\gamma$) is not very different from
what is already implemented in tools that can generate a \SCG for a
standard \TPN. In particular, the ``intermediate'' domain
$D_\mathit{int}$ computed in (C1) is exactly the domain obtained from
$D$ in a standard \TPN. We only need two added elements. First, we
need to apply a numerical function over the coefficients of
$D_\mathit{int}$; this is easy if the tool already supports
associating a function to a transition in a \TPN (as it is the case
with Tina). Next, we need to compute the maximal value of a numerical
functions over a given interval; this can be easily added to the tool
or delegated to a numerical solver. Actually, for the examples
presented in Sect.~\ref{sec:two-appl-dynam}, we only need to use
affine functions, for which the maximal value can be defined by a
straightforward arithmetical expression.  As a result, it should be
relatively easy to adapt existing tools to support the addition of
fickle transitions. This assessment is supported by our experience
when extending Tina; once the semantics of fickle transitions was
stable, it took less than a week to adapt our tools and to obtain the
first results.

To our knowledge, updatable \TA is the closest model to dynamic \TPN.
The relation between these two models is not straightforward. We
consider very general update functions but do not allow the use of
multiple firing dates in an update (that would be the equivalent of
using other clocks in \TA). Also, the notion of persistent transitions
does not exist in \TA while it is central in our approach.  While the
work on updatable \TA is geared toward decidability issues, we rather
concentrate on the implementation of our extension and its possible
applications. Nonetheless, it would be interesting to define a formal,
structural translations between the two models, like it was done
in~\cite{Cassez05comparisonof,BPV06} between \TA and \TPN.
Some of our results also show similarities between fickle transitions
and the use of stopwatch~\cite{BLRV06}. In the general case, it does
not seem possible to encode one extension with the other, but it would
be interesting to look further into this question. Finally, since the
notion of slope is central in our implementation of the QSS method
(see Sect.~\ref{sec:simple-example-dynam}), it would be interesting to
compare our results with an approach based on multirate
transitions~\cite{Daws95twoexamples}, that is a model where time does
not advance at the same rate in all the transitions.

For future works, we plan to study an extension of our approach to
other models of real-time systems and to other state-space
abstractions. For instance the Strong \SCG construction
of~\cite{Ber03b}, that is finer than the \SCG construction but that is
needed when considering the addition of priorities. The strong \SCG
relies on the use of \emph{clock domains}, rather than firing domains,
and has some strong resemblance with the zone constructions commonly
used for analysis of \TA. Another, quite different, type of
abstractions rely on the use of a discrete time semantics for \TPN. We
can obtain a discrete semantic by, for instance, restricting
continuous transitions $\trans{\theta}$ to the case where $\theta$ is
an integer. This approach could be useful when modeling hybrid
systems, since it is a simple way to add a quantization over time as
well as over values.

\newpage

\bibliographystyle{plain}
\bibliography{main}

\newpage

\appendix


\section{Proof of Theorem 1}
\label{sec:proof-theorem-1}

\noindent\textbf{Theorem~\ref{th:standard}:}
\textit{For every weak \DTPN, $N$, with a finite set of reachable
  markings, there is a \TPN, $N^\times$, with an equivalent
  semantics.}\\

We say that two nets have equivalent semantics if their state graphs
are weakly timed bisimilar (see Def.~\ref{def:bisim} below).

We assume that $N$ is the weak \DTPN $\langle
{P},{T},{\Pre},{\Post},m_0,\SIF, \DIF \rangle$. Since $N$ is weak, the
function \DIF is trivial and the behavior of persistent transitions is
the same than for \TPN. By hypothesis, we also have that the set of
markings of $N$, say $\Marks$, is bounded.

We define a 1-safe \TPN $N^\times$ that will simulate the execution of
$N$. Some places of $N^\times$ will be used to denote the marking in
the net $N$. We denote $P_{\Marks}$ the set containing one place for
every marking in \Marks. We use the same symbol, $m$, to denote the
place and the marking. The places of $P_{\Marks}$ are a subset of the
places of $N^\times$.

Since a \TPN is also an example of weak \DTPN, our construction can be
used in order to find a 1-safe \TPN equivalent to any given (bounded)
\TPN.

\begin{corollary}
  For every \TPN, with a finite set of reachable markings, there is a
  1-safe \TPN with an equivalent semantics.
\end{corollary}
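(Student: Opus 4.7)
The corollary is an immediate consequence of Theorem~\ref{th:standard}. My plan is a two-step observation: first, embed any given \TPN into the class of weak \DTPN; second, invoke the theorem and note that its construction yields a 1-safe net.

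Given a \TPN $N = \langle P, T, \Pre, \Post, m_0, \SIF \rangle$ with a finite set of reachable markings, I would reinterpret it as a weak \DTPN by taking as static interval function $\SIF'(t, m) := \SIF(t)$ (constant in $m$) and as dynamic interval function $\DIF(t, m, \theta) := [\theta, \theta]$ (trivial). This matches exactly the definition of weak \DTPN given in Section~\ref{sec:inter-subcl}. The state graph of $N$ seen as a weak \DTPN coincides with its original state graph as a \TPN, since the reinterpreted interval functions do not genuinely depend on their added arguments (consult Definition~\ref{def:sg}: $\SIF$ is evaluated at the current marking both in the initial state clause and in the newly-enabled clause, and in both cases the value is simply $\SIF(t)$; likewise the trivial $\DIF$ preserves firing dates, matching standard \TPN persistence). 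Theorem~\ref{th:standard} then produces a \TPN $N^\times$ whose semantics is equivalent (weakly timed bisimilar) to that of $N$.

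It remains only to verify that $N^\times$ is 1-safe. This holds by construction: as announced in the setup of the proof of Theorem~\ref{th:standard} given above, $N^\times$ is explicitly defined as a 1-safe \TPN, with places drawn from $P_\Marks$ (one boolean marking-indicator place per reachable marking) together with the finite collection of boolean auxiliary places that the simulation requires. Since every place introduced by the construction is boolean, $N^\times$ is 1-safe and is the desired equivalent net.

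There is essentially no independent obstacle in the proof of the corollary: all of the technical work is carried out in Theorem~\ref{th:standard}. The only thing worth recording explicitly is that the hypothesis of a finite reachable marking set for $N$ is precisely what licenses the construction of $N^\times$, because it is what makes the indicator place set $P_\Marks$ finite and hence the whole net $N^\times$ finite.
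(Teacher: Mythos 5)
Your argument is exactly the paper's: a \TPN is by definition a weak \DTPN (constant $\SIF$, trivial $\DIF$), so applying the construction of Theorem~\ref{th:standard} — whose output $N^\times$ is explicitly built as a 1-safe net over the finite place set $P_{\Marks}$ plus the auxiliary places — immediately yields the corollary. Correct, and no different in substance from the paper's one-line justification.
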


The definition of $N^\times$ is based on the composition of a
collection of \TPN, denoted $E(t, I)$, that models the situation where
the transition $t$ of $N$ is currently enabled and where the firing
date of $t$ was picked in the time interval $I$. Therefore $I$ belongs
to the set of time intervals, denoted $\IM$, that can appear during
the evolution of $N$. The set $\IM$ is also finite and has less than
$|T|.|\Marks|$ elements.
\[
\IM = \left \{\ \SIF(k, m) \divides m \in \Marks, k \in T \ \right \}
\]
When dealing with a particular transition $t$, we can restrict to time
intervals of the form $\SIF(t,m)$ where $t$ is enabled at $m$.
\[
\Itrv(t) = \left \{\ \SIF(t, m) \divides t \in {\cal E}(m) \ \right \}
\]
Before defining formally $N^\times$, we start by defining some useful
notations and by giving some intuitions on our encoding.

\subsection{Definitions and Useful Notations}
\label{sec:useful-notations}

We define the \TPN $E(t, I)$ for every pair $(t, I)$ of a transition
$t$ in $T$ and a time interval $I$ in $\Itrv(t)$. We give a graphical
description of the net in Fig.~\ref{fig::appa}.

The net $E(t, I)$ has two places $\pp{t}{I}$ and $\pq{t}{I}$. The
place $\pp{t}{I}$ is the \emph{initial place} of
$E(t,I)$. Intuitively, we will place a token in $\pp{t}{I}$ when the
transition $t$ becomes newly-enabled by a marking, say $m$, and $I =
\SIF(t,m)$. The token moves to $\pq{t}{I}$ when the transition has
been enabled for long enough, that is when we reach the firing date of
$t$. Hence the purpose of transition $t_I$ is to record the timing
constraint associated to $t$. This transition is ``local'' to $E(t,
I)$, meaning that no other places in $N^\times$ has access to it.

\begin{figure}
  \centering
  {\includegraphics[width=\textwidth]{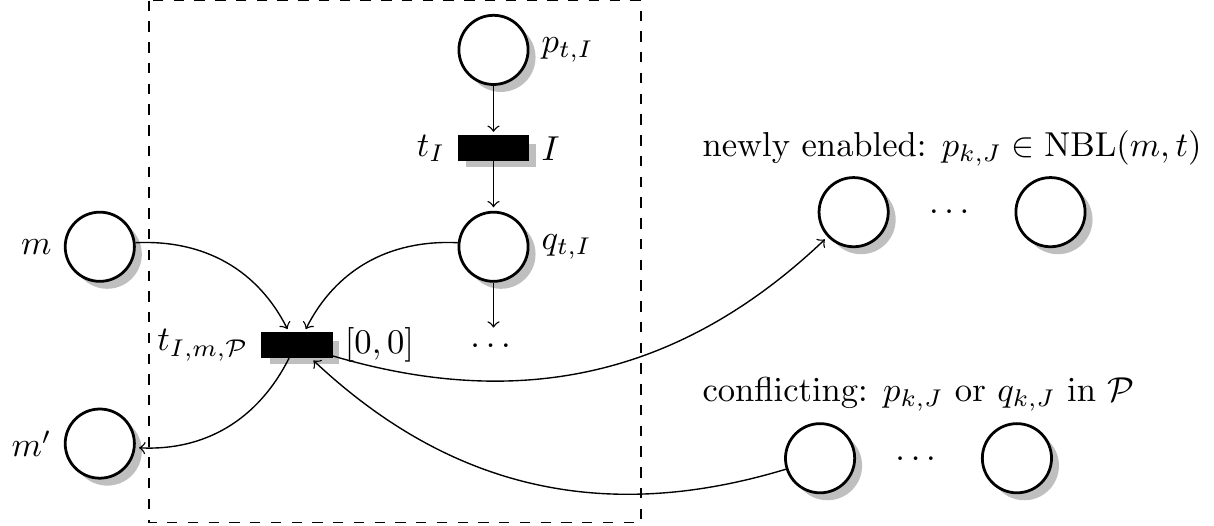}}
  \caption{The \TPN $E(t,I)$\label{fig::appa}}
\end{figure}

The final ingredient in the definition of $E(t, I)$ is a collection of
transitions $t_{I,m,{\cal P}}$, where the marking $m$ enables $t$, that
is $t$ is in ${\cal E}(m)$. These transitions have timing constraints
$[0,0]$ and can empty the token in place $\pq{t}{I}$. The purpose of
the transition $t_{I,m,{\cal P}}$ is to model the firing of transition
$t$ from the marking $m$ in $N$. In particular, a transition
$t_{I,m,{\cal P}}$ will empty the place $m$ of $P_{\Marks}$ and put a
token in the place $m'$ such that $m \trans{t} m'$, that is $m' = m -
\Pre(t) + \Post(t)$.

When the transition $t_{I,m,{\cal P}}$ fires, it should also
``enable'' new transitions and ``disable'' the transitions that are in
conflict with $t$ in $N$. More precisely, the transition should: (1)
put a token on the initial place of the net $E(k, J)$, where $k \in
\nenabl(m,t)$ and $J = \SIF(k, m')$; and (2) remove the token from the
net $E(k', J')$ such that $k'$ was enabled at $m$ but not at $m'$
(conflicting transitions). The transitions of $N$ that are persistent
when $t$ fires are not involved; therefore their firing date are left
untouched. The treatment of conflicting transitions is quite
complex. Indeed, it is not possible to know exactly the time interval
$J'$ associated to $k'$ and therefore we should test all possible
combinations. Another source of complexity is that the token in $E(k',
J')$ can be either in the initial place, $\pp{k'}{J'}$, or in the
place $\pq{k'}{J'}$ (The parameter ${\cal P}$ is used to differentiate
the multiple choices.)

We define the predicate $\NBL(m,t)$ that describes the set of initial
places of the nets $E(k,J)$ such that $k$ is newly-enabled after $t$
fires from $m$.
\[
\displaystyle \NBL(m,t) = \left \{ \pp{k}{J} \divides k \in \nenabl(m,t) \wedge \left ( m
\trans{t} m' \right ) \wedge \left ( J = \SIF(k,m')
\right ) \right \}
\]
Likewise, we define the predicate $\CFL(t,m)$ that describes sets of
places in the net $E(k,J)$ such that $k$ conflicts with $t$ at marking
$m$. The definition of $\CFL$ relies on the relation $t \cfl_m k$,
meaning that $k$ and $t$ are in conflict in the marking $m$, that is
$m \not\geq \Pre(t) + \Pre(k)$. A set of places ${\cal P}$ is in
$\CFL(t,m)$ if it has exactly one place in each transition in conflict
with $t$.
\[
\begin{array}[c]{rcl}
\cfl(m,t) &=& \{  k \divides m \not\geq \Pre(t) + \Pre(k) \}\\[1em]
\CFL(m,t) &=&
\begin{array}[t]{cl}
  \{  \{ r_{k_1,J_1} , \dots , r_{k_n,J_n} \} 
  \divides & r \in \{p, q\} \ \wedge\  \cfl(m,t) = \{ k_1, \dots, k_n \}\\
  & \wedge \  J_1 \in \Itrv(k_1) \wedge \dots \wedge J_n \in \Itrv(k_n) \}\\
  \end{array}
\end{array}
\]
There are at most $2^{|T|}$ sets in $\CFL(m,t)$. Next, we use all
these predicates to formally define the net $E(t,I)$ and, ultimately,
the \TPN $N^\times$.

\begin{definition}\label{def:appa1}
  The net $E(t,I)$ is the 1-safe \TPN such that:
  \begin{itemize}
  \item the set of places is $P^\times$;
    \[
    P^\times = P_{\Marks} \cup \{ \pp{t}{I}, \pq{t}{I} \divides t \in T, I
    \in \Itrv(t)  \}
    \]
  \item the set of transitions is $\{ t_I \} \cup \{ t_{I,m,{\cal P}}
    \divides t \in {\cal E}(m), {\cal P} \in \CFL(m,t) \}$;
  \item the pre- and postconditions of $t_I$ are as in
    Fig.~\ref{fig::appa};
  \item the places in the precondition of $t_{I,m,{\cal P}}$ are $\{
    \pq{t}{I} , m \} \cup {\cal P}$
  \item the places in the postcondition of $t_{I,m,{\cal P}}$ are $\{
    m' \} \cup \NBL(m,t)$
  \item the static time interval of $t_I$ is $I$ and of the
    transitions $t_{I,m,{\cal P}}$ is $[0,0]$;
  \item there is no token in the net in the initial marking;
\end{itemize}
\end{definition}

All the nets $E(t,I)$ have the same set of places. The net $N^\times$
is the 1-safe \TPN obtained by the ``union'' of the nets $E(t,I)$;
places with the same identifier are fusioned and transitions are not
composed.

\begin{definition}
  The net $N^\times$ is the 1-safe \TPN such that:
  \begin{itemize}
  \item the set of places is $P^\times$, as in Def.~\ref{def:appa1};
  \item the set of transitions is $T^\times$;
    \[ 
    T^\times = \{ t_I , t_{I,m,{\cal
        P}} \divides m \in \Marks, t \in {\cal E}(m), I \in \Itrv(t),
    {\cal P} \in \CFL(m,t) \} \]

  \item the static time interval and the pre- and postconditions of
    the transitions in $T^\times$ are as in Def.~\ref{def:appa1};
  \item in the initial marking there is one token in each place
    $\pp{t}{I}$ such that $t$ is enabled at $m_0$, the initial marking
    of $N$, and $I = \SIF(t,m_0)$;
\end{itemize}  
\end{definition}

Our encoding of $N$ is not very concise. Indeed, the best bounds for
the size of $N^\times$ are in $O(|T|.|\Marks|)$ for the number of
places and in $O(2^{|T|}.|T|.|\Marks|^2)$ for the number of
transitions. We can strengthen these bounds if $N$ is a \TPN, that is
when there is only one possible time interval for each transition. In
this case the bound for the number of places is $O(|T| + |\Marks|)$
and the bound for the number of transitions is in $O(|T|.2^{|T|})$.
We can also choose a more concise representation for the markings;
such that we use a vector of places to encode the possible markings
(in binary format) instead of using one place for every single marking
(a representation in unary format).


\subsection{Correctness of our Encoding}
\label{sec:equiv-betw-state}

We start by recalling the notion of (weak) timed similarity between
Timed Transition Systems (TTS) (see for example [The Expressive Power
of Time Petri Nets, B{\'e}rard et al, 2012]). We consider a
distinguished set of actions that stands for ``silent/unobservable
events''; we assume that every silent action as the label $\tau$.  The
weak transition relation $\wtrans{\alpha}$ is defined as $\left (
  \trans{\tau} \right )^* \trans{\alpha}$ if $\alpha \neq \tau$ and as
$\left ( \trans{\tau} \right )^*$ otherwise. Hence we always have $s
\wtrans{\tau} s$ for every state $s$.

\begin{definition}\label{def:bisim}
  Assume $SG_1 = \langle S_1,S^1_0,\rightarrow_1\rangle$ and $SG _2 =
  \langle S_2,S^2_0,\rightarrow_2\rangle$ are two TTS. A binary
  relation $\RR$ over $S_1 \times S_2$ is a \emph{weak timed
    simulation} if, whenever $s_1 \trans[1]{\alpha} s'_1$ in $SG_1$
  then for every state $s_2 \in S_2$ such that $s_1 \RR s_2$ there is
  a state $s'_2$ such that $s_2 \wtrans[2]{\alpha} s'_2$ and $s'_1 \RR
  s'_2$.

  We say that two TTS are (weakly timed) bisimilar, denoted $SG_1
  \approx SG_2$, if there is a binary relation $\RR$ over $S_1 \times
  S_2$ such that both $\RR$ and $\RR^{-1}$ are weak timed simulations.
\end{definition}
 
Next we show that $SG$, the state graph of $N$, and $SG^\times$, the
state graph of $N^\times$, are bisimilar. The definition of $\approx$
depends implicitly on the definition of the silent events $\tau$. In
our case, the only silent actions correspond to the discrete events
$t_{I}$ in $T^\times$. Intuitively, an action of the form $t_{I}$ only
indicates that the transition $t \in T$ has reached its firing
date. It has no effect on the marking (places in $P_{\Marks}$) or on
the other nets $E(k, J)$. On the opposite, an action $t_{I,m,{\cal
    P}}$ commits the decision to fire $t$. We use the action $t$ to
refer to any transition of the kind $t_{I,m,{\cal P}}$ in $SG^\times$.

We list a sequence of properties on the semantics of $N^\times$. Since
this a 1-safe net, we say that a place $r$ is marked on a state
$(m^\times, \tI^\times)$ of $N^\times$ if $m^\times(r) = 1$. The
following properties hold on every reachable state in $SG^\times$:
\begin{enumerate}
\item \label{itm:1st} there is only one token marked in the places $P_{\Marks}$;
\item \label{itm:2nd} if $m$ is marked then there is only one token
  among the collection of (sub)nets $E(t,I)$, for every $t \in {\cal
    E}(m)$. Moreover there are no token in the net $E(k,J)$ if $k
  \notin {\cal E}(m)$;
\item \label{itm:3rd} if the places $m$ and $\pq{t}{I}$ of $E(t,I)$
  are marked then there is exactly one set ${\cal P}$ in $\CFL(m,t)$
  such that $t_{I,m,{\cal P}}$ is enabled; this is the only transition
  enabled in $E(t,I)$.
\end{enumerate}
We can prove these properties by induction on the sequence of
transitions (the path) from the initial state of $SG^\times$ to a
state. If the net $E(t,I)$ is marked, it means that the timing
constraints of $t$, at the time it was newly enabled, was $I$.\\

We define an interpretation function $\interp{.}$ between states of
$SG^\times$ and states of $SG$. Assume $s_\times = (m^\times,
\tI^\times)$ is a state in $SG^\times$, then $\interp{s_\times}$ is
the state $(m, \tI)$ such that:
\begin{itemize}
\item the marking $m$ corresponds to the only place of the kind $m \in
  P_{\Marks}$ that is marked in $m^\times$ (see property~\ref{itm:1st}
  above);
\item for every $t \in {\cal E}(m)$ there is a unique net $E(t,J)$
  marked in $N^\times$ (see property~\ref{itm:2nd} above), then if the
  place $\pp{t}{J}$ is marked we have $\tI(t) = \tI^\times(t_{t,J})$
  and if $\pq{t}{J}$ is marked we have $\tI(t) = 0$.
\end{itemize}
We observe that, with our interpretation, the initial states of
$SG^\times$ is mapped to the initial state of $SG$. Again, using an
induction on the paths of $SG^\times$, it is possible to prove that
every state of the form $\interp{s_\times}$ is reachable in
$N$. Conversely, we prove that every state in $SG$ has a counterpart
in $SG^\times$. Actually, we prove a stronger property that will be
useful to prove the equivalence between state graphs.

\begin{lemma}\label{lemma:1}
  For every state $s \in SG$ there is a state $s_\times \in SG^\times$
  such that $s = \interp{s_\times}$ and for every action $\alpha \in T
  \cup \pReal$; if $s \trans{\alpha} s'$ then there is a state
  ${s'_\times}$ in $SG^\times$ such that ${s_\times} \wtrans{\alpha}
  {s'_\times}$ and $s'= \interp{s'_\times}$.
\end{lemma}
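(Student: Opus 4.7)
The plan is to prove the lemma by induction on the length of a timed path from the initial state of $SG$ to the given state $s$, constructing along the way both the preimage $s_\times$ and, when the path has a last transition, the weak transition in $SG^\times$ that simulates it. For the base case, the initial state $s_\times^0$ of $N^\times$ carries exactly one token in the place $m_0 \in P_{\Marks}$ and one token in each initial place $\pp{t}{\SIF(t,m_0)}$ for $t \in \E(m_0)$; by the standard \TPN semantics of $N^\times$, the firing date of each $t_{\SIF(t,m_0)}$ is picked in $\SIF(t,m_0)$, and matching these choices to those made by $SG$ yields an $s_\times^0$ with $\interp{s_\times^0} = s_0$.

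For the inductive step, I split on the type of the action $\alpha$. For a time elapse $s \trans{\theta} s'$, I would directly take $s'_\times$ to be the state obtained by letting $\theta$ elapse from $s_\times$. This requires checking that $\tI^\times \dotminus \theta$ is well-defined, which I would deduce from the hypothesis together with the definition of $\interp{.}$: in every marked subnet $E(t,J)$, either $\pp{t}{J}$ is marked and $\tI^\times(t_J) = \tI(t) \geq \theta$, or $\pq{t}{J}$ is marked; in the latter case property~\ref{itm:3rd} gives a unique enabled transition $t_{J,m,{\cal P}}$ with firing date $0$, and the interpretation forces $\tI(t) = 0$, hence $\theta = 0$. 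For a discrete transition $s \trans{t} s'$ we have $\tI(t) = 0$; let $J$ be the unique interval for which $E(t,J)$ is the active subnet in $s_\times$. If $\pp{t}{J}$ is marked then $\tI^\times(t_J) = 0$ by the interpretation, and I first fire the silent transition $t_J$ to move the token to $\pq{t}{J}$; otherwise $\pq{t}{J}$ is already marked and this silent step is skipped. In both sub-cases, property~\ref{itm:3rd} yields the unique ${\cal P} \in \CFL(m,t)$ such that $t_{J,m,{\cal P}}$ is enabled at time $0$, and firing it produces $s'_\times$ together with the required weak transition $s_\times \wtrans{t} s'_\times$.

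The technical core is to verify that $\interp{s'_\times} = s'$. I would decompose this into four checks that mirror the four effects listed in Def.~\ref{def:sg}, case $(i)$, and the four constituents of the pre/postcondition of $t_{J,m,{\cal P}}$: first, the only marked place of $P_{\Marks}$ is updated from $m$ to $m' = m - \Pre(t) + \Post(t)$; second, each subnet $E(k, J')$ with $k \in \pers(m,t)$ is untouched, preserving its firing date in agreement with the trivial $\DIF$ of a weak net; third, for each $k \in \nenabl(m,t)$ a token is placed in the initial place $\pp{k}{\SIF(k,m')}$ provided by $\NBL(m,t)$, and the firing date of $t_{\SIF(k,m')}$ is then picked in $\SIF(k,m')$, matching the nondeterministic choice of $\tI'(k)$ in $SG$; fourth, every transition conflicting with $t$ has its token consumed through the selection of ${\cal P}$. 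The main obstacle is precisely this bookkeeping, and in particular aligning the independent nondeterministic choices of firing dates on the two sides so that the witness $s'_\times$ interprets back to exactly the prescribed $s'$. The alignment goes through cleanly only because $N$ is weak, which ensures that persistent firing dates are copied verbatim on both sides, and because the static intervals $\SIF(k,m')$ are identical in the two models.
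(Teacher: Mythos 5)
Your proof is correct and follows essentially the same route as the paper's: induction on the path from the initial state, a case split on discrete versus continuous actions, and absorbing the silent $t_I$ step into the weak transition before firing the unique enabled $t_{I,m,{\cal P}}$. Your treatment is in fact somewhat more explicit than the paper's sketch (notably the well-definedness check for time elapse via property~\ref{itm:3rd} and the four-part verification that $\interp{s'_\times} = s'$), but it introduces no new idea beyond what the paper's argument already contains.
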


\begin{proof}[sketch] By induction on the sequence of transitions from
  the initial state $s_0$ of $SG$ to $s$. We already observed that
  $s_0$ is the interpretation of the initial state of
  $SG^\times$. Assume that $s = (m, \tI)$ has a counterpart $s_\times$
  in $SG^\times$ and that $s \trans{\alpha} s'$.

  We first study the case of discrete transitions, that is $\alpha =
  t$ with $t \in {\cal E}(m)$. Assume $E(t,I)$ is the net marked in
  $s_\times$ that corresponds to $t$. Since there is a discrete
  transition from $s$, we have that $\tI(t) = 0$, which means that
  either $tI^\times(t_{I}) = 0$ (the transition can fire in
  $N^\times$) or that $\pq{t}{I}$ is already marked. Then there is a
  unique set ${\cal P}$ such that $t_{I,m,{\cal P}}$ can fire, and it
  can fire immediately. This means that there is a state $s'_\times$
  such that $s_\times \wtrans{t} s'_\times$. By definition of
  $t_{I,m,{\cal P}}$, we can choose the same firing dates for the
  newly enabled transitions in $s'_\times$ than in $s'$, hence $s'=
  \interp{s'_\times}$.

  Assume that $\alpha$ is a continuous action $\theta \in \pReal$. We
  need to prove that we can let the time elapse of $\theta$ in the
  state $SG^\times$. We can assume that $\theta \neq 0$, otherwise
  $s'= s$. Since we can let $\theta$ elapse from $s$ we have that
  $\theta \leq \tI(t)$ for every transition $t \in {\cal E}(m)$. By
  definition of $\interp{.}$ we have that $\tI(t) = \tI^\times(t_{I})$
  for some interval $I \in \Itrv(t)$. Then we also have ${s_\times}
  \wtrans{\theta} {s'_\times}$ and $s'= \interp{s'_\times}$.
\end{proof}

Our candidate relation $\RR$ for the bisimulation is the binary
relation from $SG \times SG^\times$ such that $s \mathrel{\RR}
s^\times$ if and only if $s = \interp{s^\times}$. From our previous
results we already have that $\RR$ is total. Hence we just need to
prove that both $\RR$ and $\RR^{-1}$ are simulations. The property for
$\RR$ is a direct corollary of Lemma~\ref{lemma:1}. For the inverse
relation, we assume that $s_\times \trans{\alpha} s'_\times$ in
$SG^\times$. We have three possible case for the action $\alpha$. The
case $\alpha = \tau$ is trivial since, in this case, we have
$\interp{s_\times} = \interp{s'_\times}$. The cases where $\alpha$ is
a discrete transition $t$ or a continuous transition $\alpha \in
\pReal$ is similar than for Lemma~\ref{lemma:1}. Hence $SG$ and
$SG^\times$ are weakly time bisimilar. QED.

\end{document}